\documentclass[a4paper,11pt]{llncs}
\pagestyle{plain}

%DIVERS
\usepackage[english]{babel} 
\usepackage[T1]{fontenc}   
\usepackage[latin1]{inputenc}
\usepackage{amsmath}
\usepackage{amsfonts}
\usepackage{amssymb}
\usepackage{mathrsfs}
\usepackage{url}
\usepackage{enumerate}
\usepackage{algorithm}
\usepackage{algorithmic}
\usepackage{color}
\usepackage{hyperref}
%\usepackage{booktabs}
%\usepackage{marvosym}
%\usepackage{fourier}

%THEOREMES ETC
%\theoremstyle{plain}
%\newtheorem{thm}{Theorem}
\spnewtheorem{heuristic}{Heuristic}{\bfseries}{\itshape}
\spnewtheorem{assumption}[theorem]{Assumption}{\bfseries}{}
\spnewtheorem{nota}{Notation}{\bfseries}{}
\spnewtheorem{fact}{Fact}{\bfseries}{}

%\theoremstyle{remark}
%\newtheorem{rem}{Remark}

% Titre et Auteurs
\title{An efficient structural attack
  on NIST submission DAGS}
\date{}
\author{}
%%% Auteurs et affiliations pour l'environnement llncs
\author{\'Elise Barelli\inst{1} \and Alain Couvreur\inst{1}}
\institute{INRIA \&
LIX, CNRS UMR 7161,\\
\'Ecole polytechnique, 91128 Palaiseau Cedex, France.\\
 \email{elise.barelli@inria.fr},
 \email{alain.couvreur@lix.polytechnique.fr}
}

%%% Auteurs et affiliations pour l'environnement article
% \author[1]{\'Elise Barelli\thanks{\tt elise.barelli@inria.fr}}
% \author[1]{Alain Couvreur\thanks{\texttt{alain.couvreur@lix.polytechnique.fr}}}
% \affil[1]{INRIA \&  LIX, CNRS UMR 7161\break
% \'Ecole polytechnique, 91128 Palaiseau Cedex, France.}

%% Divers

\newcommand{\tr}{\ensuremath{\textrm{Tr}}}
\newcommand{\nr}{\ensuremath{\textrm{N}}}
\newcommand{\Z}{\mathbb{Z}}

\newcommand{\eqdef}{\stackrel{\text{def}}{=}}

%% Pour avoir de jolis symboles d'inegalite
\renewcommand{\leq}{\leqslant} 
 
\renewcommand{\geq}{\geqslant}

%% Corps finis
\newcommand{\Fq}{\ensuremath{\mathbb{F}_q}}
\newcommand{\Fqq}{\ensuremath{\mathbb{F}_{q^2}}}
\newcommand{\Fqm}{\ensuremath{\mathbb{F}_{q^m}}}

\newcommand{\F}{\ensuremath{\mathbb{F}}}

%% Codes
\newcommand{\code}[1]{\ensuremath{\mathscr{#1}}}
\newcommand{\Cpub}{{\code{C}_{\text{pub}}}}

\newcommand{\AC}{\code{A}}
\newcommand{\BC}{\code{B}}
\newcommand{\CC}{\code{C}}
\newcommand{\DC}{\code{D}}

\newcommand{\SC}{\code{S}}

\newcommand{\EC}{\code{E}}
\newcommand{\XC}{\code{X}}
\newcommand{\ZC}{\code{Z}}
\newcommand{\NTC}{\code{N}{\!\!}\code{T}}

%% Produit * et conducteur
\newcommand{\spc}[2]{#1 \star #2}

\newcommand{\cset}[3]{\left\{\left.\left(#1\right)_{#2} ~\right|~#3 \right\}}

\newcommand{\sq}[1]{#1^{\star 2}}

\newcommand{\cond}[2]{\mathbf{Cond}(#1, #2)}

%% Mots

\newcommand{\av}{\mat{a}}
\newcommand{\bv}{\mat{b}}
\newcommand{\cv}{\mat{c}}

\newcommand{\uv}{\mat{u}}

\newcommand{\xv}{{\mat{x}}}

\newcommand{\yv}{{\mat{y}}}

\newcommand{\onev}{{\mat{1}}}

%% Poids et support

%% Cardinaux et Probas

%% Matrices
\newcommand{\mat}[1]{\ensuremath{\boldsymbol{#1}}}

\newcommand{\Gm}{\mat{G}}
\newcommand{\Hm}{\mat{H}}

%% Ensembles d'indices

\newcommand{\Ind}{\mathcal{I}}

%% Codes algébriques

\newcommand{\GRS}[3]{\text{\bf GRS}_{#1}(#2,#3)}
\newcommand{\RS}[2]{\text{\bf RS}_{#1}(#2)}
\newcommand{\Alt}[3]{\code{A}_{#1}(#2,#3)}

%% Espaces de polynomes

%% Raccourcis & poinçonnés
\newcommand{\sh}[2]{\mathcal{S}_{#2}\left(#1\right)}
\newcommand{\pu}[2]{\mathcal{P}_{#2}\left(#1\right)}

%% Localisateur
\newcommand{\loc}[1]{\pi_{#1}}

%% Application
\newcommand{\map}[4]{\left\{
\begin{array}{ccc}
#1 & \longrightarrow & #2 \\
#3 & \longmapsto     & #4
        \end{array}
\right.}

%% Codes invariants, actions de groupes
\newcommand{\group}{\mathcal{G}}
\newcommand{\inv}[2]{{#1}^{#2}}
\newcommand{\punctinv}[2]{\overline{#1}^{#2}}

%% DAGS
\newcommand{\dags}[1]{\texttt{DAGS\_{#1}}}

%% Macros de Commentaires. Un auteur, une couleur

%% Macros de Commentaires. Un auteur, une couleur
%% A supprimer dans version finale

\begin{document}

\maketitle

% \begin{center}
%   \includegraphics[scale=.25]{dagger}
% \end{center}

\begin{abstract}
  We present an efficient key recovery attack on code based
  encryption schemes using some quasi--dyadic alternant codes with
  extension degree $2$. This attack permits to break the proposal DAGS
  recently submitted to NIST.
\end{abstract}

\keywords{Code-based Cryptography \and McEliece
encryption scheme \and Key recovery attack \and Alternant codes
\and Quasi--dyadic codes \and Schur pro\-duct of codes.}

\section*{Introduction}
In 1978, in the seminal article \cite{M78}, R.~J.~McEliece designed a
public key encryption scheme relying on the hardness of the bounded
decoding pro\-blem~\cite{BMT78}, {\em i.e.} on the hardness of decoding
an arbitrary code. For a long time, this scheme was considered as
unpractical because of the huge size of the public keys compared to
public key encryption schemes relying on algorithmic number theoretic
problems.
% Hence, for a long time, code based cryptography was considered as a
% purely theoretic area with few pers\-pectives of practical applications.
The trend changed in the last decade because of the progress of
quantum computing and the increasing threat of the existence in a near
future of a quantum computer able to break usual cryptography
primitives based on number theoretic pro\-blems.
% : integer factorisation
% and discrete logarithm in finite fields or elliptic curves.
An evidence for this change of trend is the recent call of the
National Institute for Standards and Technology (NIST) for post
quantum cryptography. The majority of the submissions to this call are
based either on codes or on lattices.

After forty years of research on code based cryptography, one can
identify two general trends for instantiating McEliece's scheme.  The
first one consists in using codes from probabilistic constructions
such as MDPC codes \cite{MTSB13,BBC13}. The other one
consists in using algebraic codes such as Goppa codes or more
generally alternant codes. A major difference between these two
families of proposals is that the first one, based on MDPC codes
benefits in some cases from clean security reductions to the
decoding problem.

Concerning McEliece instantiations based on algebraic codes,
which include McEliece's original proposal based on binary Goppa
codes, two approaches have been considered in order to address the
drawback of the large of pubic key sizes.
On the one hand, some proposals suggested to replace Goppa or
alternant codes by more structured codes such as generalised
Reed--Solomon (GRS) codes \cite{N86}, their low dimensional subcodes
\cite{BL05}, or GRS codes to which various transformations have been
applied \cite{W06,BBCRS14,W16}. It turns out that most of these
proposals have been subject to polynomial time key-recovery attacks
\cite{SS92,W10,CGGOT14,COTG15}.  In addition, proposals based on Goppa codes
which are {\em close} to GRS codes, namely Goppa code with a low
extension degree $m$ have been the target of some structural attacks
\cite{FPP14,COT17}.
% such as distinguishing filtration attacks against wild Goppa codes
% with extension degree $m=2$ \cite{COT17} and algebraic attacks based
% on multivariate polynomial system solving against wild Goppa codes
% with extension degree $m\leq 3$ over non prime fields \cite{FPP14}.
On the other hand, many proposals suggest the use of codes with a non
trivial automorphism group \cite{G05,BCGO09,MB09,P12}.  A part of
these proposals has been either partially or completely broken
\cite{OTD08,FOPT10,FOPPT16}.  In particular, in the design of such
proposals, precautions should be taken since the knowledge of a non
trivial automorphism group of the public code
facilitates algebraic attacks by significantly reducing the degrees
and number of variables of the algebraic system to solve
in order to recover the secret key.

Among the recent submissions to NIST call for post quantum
cryptography, a proposal called DAGS~\cite{BBBCDGGHKONPR17} is based
on the use of quasi--dyadic (QD) generalised Srivastava codes with
extension degree $m = 2$. By {\em quasi--dyadic} we mean that the
permutation group of the code is of the form $(\Z / 2\Z)^\gamma$ for
some positive integer $\gamma$. Moreover, generalised Srivastava codes
form a proper subclass of alternant codes.  DAGS proposal takes
advantage of both usual techniques to reduce the size of the
keys. First, by using alternant codes which are close to generalised
Reed Solomon codes {\em i.e.} with an extension degree $2$. Second, by
using codes with a large permutation group.  In terms of security with
respect to key recovery attacks, DAGS parameters are chosen to be out
of reach of the algebraic attacks \cite{FOPT10,FOPPT16}. In addition,
it should be emphasised that the choice of alternant codes which are
not Goppa codes permits to be out of reach of the distinguisher by
shortening and squaring used in \cite{COT17}.

\paragraph{Our contribution}
In this article, we present an attack breaking McEliece instantiations
based on alternant codes with extension degree $2$ and a large
permutation group.  This attack permits to recover the secret key in
$O\left(n^{3+\frac{2q}{|\group|}}\right)$ operations in $\Fq$, where
$\group$ denotes the permutation group, $n$ the code length and $\F_q$
is the base field of the public code.  The key step of the attack
consists in finding some subcode of the public code referred to as
$\DC$. From this code $\DC$ and using an operation we called
{\em conductor}, the secret key can easily be recovered. For this
main step, we present two ways to proceed, the first approach is based
on a partial brute force search while the second one is based on the
resolution of a polynomial system of degree $2$.  An analysis of the
work factor of this attack using the first approach shows that DAGS
keys with respective estimated security levels 128, 192 and 256 bits
can be broken with respective approximate work factors
$2^{70}, 2^{80}$ and $2^{58}$.  For the second approach, we were not
able to provide a complexity analysis. However, its practical
implementation using Magma~\cite{BCP97} is impressively efficient on
some DAGS parameters. In particular, it permits to break claimed 256
bits security keys in less than one minute!

This attack is a novel and original manner to recover the structure of
alternant codes by jointly taking advantage of the permutation group and
the small size of the extension degree.  Even if some variant of the
attack reposes on the resolution of a polynomial system, this system has
nothing to do with those of algebraic attacks of
\cite{FOPT10,FOPPT16,FPP14}.  On the other hand, despite this attack
shares some common points with that of \cite{COT17} where the Schur product
of codes (See Section~\ref{sec:Schur} for a definition) plays a crucial role,
the keys we
break in the present article are out of reach of a distingusher by
shortening and squaring and hence our attack differs from filtration
attacks as in \cite{COT17,CMP17}.

It is worth noting that reparing DAGS scheme in order to resist to the
present attack is possible. Recently, the authors presented new
parameter sets which are out of reach of the first version of the
attack.  These new parameters are available on the current version of
the proposal\footnote{\url{https://dags-project.org/pdf/DAGS_spec.pdf}}.

%%% Debut coupe ACr
% \paragraph{Outline of the article}
% This article is organised as follows. Prerequisites on algebraic codes
% and quasi--dyadic codes are given in Section~\ref{sec:nota}.  In
% Section~\ref{sec:Schur}, we recall the definition of the Schur product
% of codes and some of its properties.  In Section~\ref{sec:conductor},
% we introduce a fundamental object for this attack called the {\em
%   conductor} of a pair of codes. Then, Section~\ref{sec:fundamental}
% presents a crucial of QD codes and their invariant subcodes which is
% crucial for the efficiency of the attack. The description of the
% attack is given in Section~\ref{sec:attack} and its complexity is
% discussed in Section~\ref{sec:complexity}.  Finally,
% Section~\ref{sec:implem} is devoted to the implementation of the
% attack and the presentation of experimental results.
%%% Fin coupe ACr

%%% Local Variables:
%%% mode: latex
%%% TeX-master: "Article"
%%% End:

\section{Notation and prerequisites}\label{sec:nota}

\subsection{Subfield subcodes and trace codes}

\begin{definition}
  Given a code $\CC$ of length $n$ over $\Fqm$, its {\em subfield
    subcode} is the subcode of vectors whose entries all lie in $\Fq$,
  that is the code:
   $$\CC \cap \Fq^n.$$
   The {\em trace code} is the image of the code by the component wise trace
   map
   $$
   \tr_{\Fqm/\Fq} (\CC) \eqdef \left\{
     \tr_{\Fqm/\Fq}(\cv) ~|~ \cv \in \CC
   \right\}.
   $$
\end{definition}

% Note that, in order to limit the amount of heavy notation, 
% we frequently omit the index ``$\Fqm/\Fq$'' of the trace in what
% follows.

Let us recall a classical and well--known result
on subfield subcodes and trace codes.

% \begin{proposition}
%   Let $\CC \subseteq \Fqm^n$ be a code of dimension $k$ and set $r = n-k$.
%   Then 
%   $$
%   \dim_{\Fq} \CC \cap \Fq^n \geq n-mr.
% %  \quad \dim_{\Fq} \tr_{\Fqm/\Fq}(\CC) \leq mr.
%   $$
% \end{proposition}

% \begin{proof}
%   \cite[Chapter 7\S 7]{MS86}. 
% \end{proof}

\begin{theorem}[Delsarte Theorem~\cite{D75}]
  \label{thm:Delsarte}
  Let $\CC  \subseteq \Fqm^n$ be a code. Then
  $$
  {(\CC \cap \Fq^n)}^\perp = \tr_{\Fqm/\Fq}(\CC^\perp).
  $$
\end{theorem}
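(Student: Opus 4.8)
The plan is to prove Delsarte's theorem by a dimension-counting argument combined with an easy inclusion. The statement relates two constructions: the dual of a subfield subcode, and the trace of a dual. The natural strategy is to show one inclusion directly from the definitions and the trace's linearity, and then match $\F_q$-dimensions on both sides using the fact that a code and its dual have complementary dimensions (over the appropriate field) and the behaviour of dimension under the subfield-subcode and trace operations.

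First I would establish the inclusion $\tr_{\Fqm/\Fq}(\CC^\perp) \subseteq (\CC \cap \Fq^n)^\perp$. Take $\word{y} \in \CC^\perp$ and $\word{x} \in \CC \cap \Fq^n$; then $\sum_i x_i y_i = 0$ in $\Fqm$, and applying the trace and using that the $x_i$ lie in $\Fq$ (so $\tr(x_i y_i) = x_i \tr(y_i)$ by $\Fq$-linearity of the trace) gives $\sum_i x_i \tr_{\Fqm/\Fq}(y_i) = \tr_{\Fqm/\Fq}\big(\sum_i x_i y_i\big) = 0$. Hence $\tr_{\Fqm/\Fq}(\word{y})$ is orthogonal to every element of $\CC \cap \Fq^n$. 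Since the trace code is spanned over $\Fq$ by such images, the inclusion follows.

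Next I would compare dimensions. Write $k = \dim_{\Fqm} \CC$, so $\dim_{\Fqm} \CC^\perp = n - k$. For the trace code, the relevant fact is that the $\Fq$-dimension of $\tr_{\Fqm/\Fq}(\DC)$ equals the $\Fq$-dimension of $\DC$ viewed as an $\Fq$-space minus the $\Fq$-dimension of the kernel of the trace map restricted to $\DC$; a cleaner route is the identity $\dim_{\Fq}\tr_{\Fqm/\Fq}(\DC) = \dim_{\Fq}(\DC) - \dim_{\Fq}(\DC \cap \ker\tr)$, but the slickest is to invoke the dual pairing directly: the $\Fq$-bilinear form $\langle \word{a},\word{b}\rangle_{\Fq} = \tr_{\Fqm/\Fq}(\sum a_i b_i)$ on $\Fqm^n$ is nondegenerate, and under this form the $\Fq$-orthogonal of a code $\DC$ (an $\Fqm$-subspace) is exactly $\DC^{\perp}$ again (the $\Fqm$-linear dual), because $\tr$ is surjective. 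The subfield subcode $\CC \cap \Fq^n$ is precisely $\{\word{x} \in \Fq^n : \langle \word{x}, \word{c}\rangle_{\Fq} = 0 \ \forall \word{c}\in\CC\}$ restricted appropriately; unwinding this and using nondegeneracy of the $\Fq$-form on $\Fq^n$ together with the surjectivity of $\tr$ yields $\dim_{\Fq}(\CC\cap\Fq^n) = n - \dim_{\Fq}\tr_{\Fqm/\Fq}(\CC)$. Applying this to $\CC^\perp$ in place of $\CC$ gives $\dim_{\Fq}\tr_{\Fqm/\Fq}(\CC^\perp) = n - \dim_{\Fq}(\CC^\perp \cap \Fq^n)$, and combining with the same formula for $\CC$ shows the two sides of the claimed identity have equal $\Fq$-dimension.

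The main obstacle is the dimension bookkeeping: one must be careful that all dimensions are taken over $\Fq$ (the codes $\CC$, $\CC^\perp$ are $\Fqm$-linear but their subfield subcodes and traces are only $\Fq$-linear), and one must correctly justify that passing to the $\Fq$-trace form does not shrink the dual of an $\Fqm$-linear code — this rests on the surjectivity of $\tr_{\Fqm/\Fq}$, equivalently the nondegeneracy of the trace form, which is where the finite-field hypothesis (or separability) is really used. Once the two inclusions-or-equalities of dimension are in hand, the already-proved inclusion $\tr_{\Fqm/\Fq}(\CC^\perp) \subseteq (\CC\cap\Fq^n)^\perp$ upgrades to equality, completing the proof.
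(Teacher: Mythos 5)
Your overall strategy---the easy inclusion $\tr_{\Fqm/\Fq}(\CC^\perp) \subseteq (\CC\cap\Fq^n)^\perp$ followed by a dimension count based on the nondegeneracy of the trace form---is the standard proof of Delsarte's theorem (the paper itself gives no proof, only the citation), and your first paragraph is correct. The dimension bookkeeping, however, contains a genuine error. The key identity you state, $\dim_{\Fq}(\CC\cap\Fq^n) = n - \dim_{\Fq}\tr_{\Fqm/\Fq}(\CC)$, is false: for $\CC = \Fqm^n$ the left-hand side is $n$ while the right-hand side is $n-n=0$. Worse, even granting it, the final ``combining'' step does not close. What the theorem requires is $\dim_{\Fq}\tr_{\Fqm/\Fq}(\CC^\perp) = n - \dim_{\Fq}(\CC\cap\Fq^n)$, whereas chaining your two displayed identities would force $\dim_{\Fq}\tr_{\Fqm/\Fq}(\CC)=\dim_{\Fq}\tr_{\Fqm/\Fq}(\CC^\perp)$, equivalently $\dim_{\Fq}(\CC\cap\Fq^n)=\dim_{\Fq}(\CC^\perp\cap\Fq^n)$, which is false in general (again $\CC=\Fqm^n$ is a counterexample).

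The repair is a one-symbol correction, and the trace-form computation you sketch in fact delivers it. For $\xv\in\Fq^n$, the set $\left\{\sum_i x_i c_i \;:\; \cv\in\CC\right\}$ is an $\Fqm$-subspace of $\Fqm$, hence it is annihilated by $\tr_{\Fqm/\Fq}$ if and only if it is zero, because $\tr_{\Fqm/\Fq}$ is a nonzero $\Fq$-linear form. This shows that the orthogonal of $\tr_{\Fqm/\Fq}(\CC)$ inside $\Fq^n$ equals $\CC^{\perp}\cap\Fq^n$, i.e.
\[
\dim_{\Fq}\left(\CC^{\perp}\cap\Fq^n\right) \;=\; n - \dim_{\Fq}\tr_{\Fqm/\Fq}(\CC).
\]
(Note the $\perp$ on the left, absent from your version.) Applying this with $\CC^\perp$ in place of $\CC$ and using $(\CC^\perp)^\perp=\CC$ gives $\dim_{\Fq}\tr_{\Fqm/\Fq}(\CC^\perp) = n-\dim_{\Fq}(\CC\cap\Fq^n) = \dim_{\Fq}\,(\CC\cap\Fq^n)^{\perp}$, which together with the inclusion from your first paragraph completes the proof. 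You should state this corrected identity explicitly; as written, your argument proves only one inclusion.
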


\subsection{Generalised Reed--Solomon codes and alternant codes}
\begin{nota}
  Let $q$ be a power of prime and $k$ a positive integer. We denote by
  $\F_q[z]_{<k}$ the vector space of polynomials over $\Fq$ whose
  degree is bounded from above by $k$. Let $m$ be a positive integer,
  we will consider codes over $\Fqm$ with their subfield subcodes over
  $\Fq$. In \S~\ref{sec:Schur} and further,
  we will focus particularly on the case $m=2$.
\end{nota}

\begin{definition}[Supports and multipliers]
  A vector $\xv \in \Fqm^n$ whose entries are pairwise distinct is called
  a {\em support}. A vector $\yv \in \Fqm^n$ whose entries are all
  nonzero is referred to as a {\em multiplier}.
\end{definition}

\begin{definition}[Generalised Reed--Solomon codes]\label{def:GRS}
  Let $n$ be a posi\-tive integer, $\xv \in \Fqm^n$ be a support and
  $\yv \in \Fqm^n$ be a multiplier. The {\em generalised Reed--Solomon
    (GRS) code with support $\xv$ and multiplier $\yv$ of dimension
    $k$} is defined as
\[
\GRS{k}{\xv}{\yv} \eqdef \left\{(y_1 f(x_1), \ldots, y_n f(x_n)) ~|~
f \in \F_q[z]_{<k}\right\}.
\]
When $\yv = \onev$, the code is a {\em Reed--Solomon code} and is denoted
as $\RS{k}{\xv}$.
\end{definition}

The dual of a GRS code is a GRS code too. This is made explicit
in Lemma~\ref{lem:dual_GRS} below. Let us first introduce an
additional notation.

\begin{nota}
  Let $\xv \subseteq \Fqm^n$ be a support,
  we define the {\em polynomial} $\loc{\xv}
 \in \Fqm[z]$ as
  $$
  \loc{\xv}(z) \eqdef \prod_{i = 1}^{n} (z - x_i).
  $$
\end{nota}

\begin{lemma}\label{lem:dual_GRS}
  Let $\xv, \yv \in \Fqm^n$ be a support and a multiplier of length $n$
  and $k \leq n$. Then
  $$
  \GRS{k}{\xv}{\yv}^\perp = \GRS{n-k}{\xv}{\yv^{\perp}},
  $$
  where 
  $$
  \yv^\perp \eqdef \left(\frac{1}{\loc{\xv}'(x_1)y_1},\ldots,
  \frac{1}{\loc{\xv}'(x_n)y_n}\right),
  $$
  and $\loc{\xv}'$ denotes the derivative of the polynomial
  $\loc{\xv}$.
\end{lemma}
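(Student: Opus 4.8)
The plan is to combine an easy dimension count with an orthogonality computation. Since the entries of $\xv$ are pairwise distinct and $k\leq n$, the evaluation map $f\mapsto (y_1f(x_1),\dots,y_nf(x_n))$ is injective on the space of polynomials of degree $<k$, so $\dim\GRS{k}{\xv}{\yv}=k$ and hence $\dim\GRS{k}{\xv}{\yv}^\perp=n-k$; likewise $\dim\GRS{n-k}{\xv}{\yv^\perp}=n-k$. Therefore it suffices to prove the inclusion
$$
\GRS{n-k}{\xv}{\yv^\perp}\subseteq \GRS{k}{\xv}{\yv}^\perp,
$$
i.e. that every codeword of the left-hand code is orthogonal, for the standard inner product, to every codeword of $\GRS{k}{\xv}{\yv}$.

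Next I would take polynomials $f$ of degree $<k$ and $g$ of degree $<n-k$ and evaluate the inner product of $(y_if(x_i))_{1\le i\le n}$ and $(y^\perp_i g(x_i))_{1\le i\le n}$. By the very definition of $\yv^\perp$ the factors $y_i$ cancel, and this inner product equals
$$
\sum_{i=1}^{n}\frac{f(x_i)\,g(x_i)}{\loc{\xv}'(x_i)}.
$$
Setting $h\eqdef fg$, one has $\deg h\leq (k-1)+(n-k-1)=n-2$, so the whole lemma reduces to the polynomial identity: for every polynomial $h$ with $\deg h\leq n-2$,
$$
\sum_{i=1}^{n}\frac{h(x_i)}{\loc{\xv}'(x_i)}=0.
$$

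To establish this identity I would invoke Lagrange interpolation at the $n$ pairwise distinct nodes $x_1,\dots,x_n$: the polynomial $\sum_{i=1}^{n}\frac{h(x_i)}{\loc{\xv}'(x_i)}\cdot\frac{\loc{\xv}(z)}{z-x_i}$ has degree $\leq n-1$ and takes the value $h(x_j)$ at each $x_j$, hence it equals $h$; comparing the coefficients of $z^{n-1}$ on both sides, the left side contributes $\sum_{i=1}^{n}\frac{h(x_i)}{\loc{\xv}'(x_i)}$ (each $\loc{\xv}(z)/(z-x_i)$ being monic of degree $n-1$), while the right side contributes $0$ since $\deg h\leq n-2$, giving the claim. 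Equivalently, $\frac{h(x_i)}{\loc{\xv}'(x_i)}$ is the residue of $\frac{h(z)}{\loc{\xv}(z)}$ at $z=x_i$, and the sum of all residues vanishes, the residue at infinity being zero because $\deg\loc{\xv}-\deg h\geq 2$. I do not expect a genuine obstacle here; the only points requiring a little care are the degree bookkeeping $\deg(fg)\leq n-2$, which is exactly what makes the two dimensions $k$ and $n-k$ fit together, and the observation that $\loc{\xv}'(x_i)\neq 0$, which holds precisely because $\xv$ is a support and so $\loc{\xv}$ has simple roots. Once the polynomial identity is in hand the lemma is immediate.
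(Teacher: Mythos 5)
Your argument is correct and complete: the dimension count reduces the lemma to one inclusion, the inner product collapses to $\sum_{i}h(x_i)/\loc{\xv}'(x_i)$ with $h=fg$ of degree at most $n-2$, and the vanishing of that sum follows from comparing the $z^{n-1}$ coefficients in the Lagrange interpolation formula (equivalently, the residue-at-infinity argument). The paper itself states this lemma without proof, as a classical fact about GRS duality; your write-up is exactly the standard derivation one would cite for it, and the two points you flag as needing care (the degree bookkeeping and $\loc{\xv}'(x_i)\neq 0$ from the simplicity of the roots) are indeed the only places where a hypothesis is actually used.
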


\begin{definition}[Alternant code]\label{def:Alternant}
  Let $m,\ n$ be positive integers such that
  $n \leq q^m$.  Let $\xv \in \Fqm^n$ be a support,
  $\yv \in \Fqm^n$ be a multiplier and $r$ be a positive integer. The
  {\em alternant code of support $\xv$, multiplier $\yv$ and degree
    $r$ over $\Fq$} is defined as
  $$
  \Alt{r}{\xv}{\yv} \eqdef \GRS{r}{\xv}{\yv}^\perp \cap \F_q^n.
  $$
  The integer $m$ is referred to as the {\em extension degree of}
  the alternant code.
\end{definition}

% \begin{proposition}\label{prop:dim_alt}
%   In the context of Definition~\ref{def:Alternant}, we have
%   $$
%   \dim \Alt{r}{\xv}{\yv} \geq n - mr.
%   $$
% \end{proposition}

% \begin{proof}
%   See \cite[Chapter 12]{MS86}. \qed
% \end{proof}

As a direct consequence of Lemma~\ref{lem:dual_GRS} and
Definition~\ref{def:Alternant}, we get the following explicit
description of an alternant code.
  \begin{equation}\label{eq:description_alternant}
  \Alt{r}{\xv}{\yv} = \cset{\frac{1}{\loc{\xv}'(x_i) y_i} f(x_i)}{i =
    1, \ldots, n}{ f \in \Fqm[z]_{< n - r}} \cap \Fq^n.
  \end{equation}
  Next, by duality and using Delsarte's Theorem
  (Theorem~\ref{thm:Delsarte}), we have
  \begin{equation}\label{eq:dual_alternant}
  \Alt{r}{\xv}{\yv}^\perp = \tr_{\Fqm/\Fq} \left(
    \cset{y_i g(x_i)}{i = 1, \ldots, n}{g \in \Fqm[z]_{<r}}
  \right).
  \end{equation}
 % where $\tr_{\Fqm/\Fq}$ denotes the component wise trace map.

  We refer the reader to \cite[Chapter 12]{MS86} for further properties
of alternant codes. Recall that the code $\Alt{r}{\xv}{\yv}$
defined in Definition~\ref{def:Alternant} has dimension $k \geq n-mr$
and equality holds in general. Moreover, these codes benefit from efficient
decoding algorithms correcting up to $\lfloor \frac r 2 \rfloor$ errors
(see \cite[Chapter 12\S 9]{MS86}).

% \subsubsection{Decoding alternant codes}
% Alternant codes come with an efficient decoding algorithm. For instance,
% see \cite[Chapter 12\S 9]{MS86}.

% \begin{fact}\label{fact:decoding}
%   Given an alternant code $\Alt{r}{\xv}{\yv}$,
%   there exists an efficient decoding algorithm correcting up
%   to $t = \lfloor \frac r 2 \rfloor$ errors.
%   This decoding algorithm can be built from the knowledge of 
%   the pair $(\xv, \yv)$.
% \end{fact}

\subsubsection{Fully non degenerate alternant codes}
We conclude this subsection on alternant codes by a definition
which is useful in the sequel. 
\begin{definition}\label{def:full_ndgn}
An alternant code $\Alt{r}{\xv}{\yv}$ is said to be {\em
fully non degenerate} if
it satisfies the two following conditions.
\begin{enumerate}[(i)]
\item A generator matrix of $\Alt{r}{\xv}{\yv}$ has no zero column.
\item $\Alt{r}{\xv}{\yv} \neq \Alt{r+1}{\xv}{\yv}$.
\end{enumerate}
\end{definition}

%It should be emphasised that,
Most of the time, an alternant code is fully non degenerate.

\subsection{Punctured and shortened codes}\label{subsec:shortening}

The notions of \textit{puncturing} and \textit{shortening} are
classical ways to build new codes from existing
ones. % These notions will be useful for the attack.
We recall here their definition.

\begin{definition}
  Let $\CC$ be a code of length $n$ and $\Ind \subseteq \{1, \ldots, n\}$.
  The {\em puncturing} and the {\em shortening of $\CC$ at $\Ind$} are
  respectively defined as the codes
  \begin{align*}
  \pu{\CC}{\Ind} &\eqdef \{(c_i)_{i \in \{1, \ldots, n\}\backslash\Ind} ~|~
  \cv \in \CC\},\\
  \sh{\CC}{\Ind} &\eqdef \{(c_i)_{i \in \{1, \ldots, n\}\backslash\Ind} ~|~
  \cv \in \CC\ {\rm such\ that}\ \forall i \in \Ind,\ c_i = 0\}.
  \end{align*}
\end{definition}

% \begin{definition}
%   Let $\CC$ be a code of length $n$ and $\Ind \subseteq \{1, \ldots, n\}$.
%   The {\em shortening of $\CC$ at $\Ind$} is defined as the code
%   \[
%   \sh{\CC}{\Ind} \eqdef \pu{\{\cv \in \CC ~|~
%   \forall i \in \Ind,\ c_i = 0\}}{\Ind}.
%   \]
% \end{definition}

% \begin{remark}
%   The shortening of a code at a subset of positions $\Ind$ is the
%   subcode of vectors whose components with indexes in $\Ind$ are all
%   zero and these zero components have been removed.  It is sometimes
%   useful when shortening a code, {\bf not} to remove these zero
%   components in order to keep a code of the same length. It is for
%   instance necessary when one wants to compute the Schur product of a
%   code and one of its shortenings. To limit the amount of notation,
%   we denote both codes by $\sh{\CC}{\Ind}$ : the one whose zero
%   components have not been removed and the one where they have been
%   removed.
% \end{remark}

Let us finish by recalling the following classical result.

\begin{nota}\label{nota:short}
  Let $\xv \in \Fqm^n$ be a vector and $\Ind \subseteq \{1, \ldots, n\}$.
  Then, the vector $\xv_{\Ind}$ denotes the vector obtained from
  $\xv$ be removing the entries whose indexes are in $\Ind$. 
\end{nota}

\begin{proposition}\label{prop:short_alt}
  Let $m, r$ be positive integers.
  Let $\xv, \yv\in \F_{q^m}^n$ be as in Definition~\ref{def:Alternant}.
  Let $\Ind \subseteq \{1, \ldots, n\}$. Then
  $$
  \sh{\Alt{r}{\xv}{\yv}}{\Ind} = \Alt{r}{\xv_{\Ind}}{\yv_{\Ind}}.
  $$
\end{proposition}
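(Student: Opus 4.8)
The plan is to deduce the identity from three elementary facts, taking as starting point the defining relation $\Alt{r}{\xv}{\yv} = \GRS{r}{\xv}{\yv}^\perp \cap \Fq^n$ of Definition~\ref{def:Alternant}: first, that shortening commutes with taking the subfield subcode; second, the classical duality identity $\sh{\DC^\perp}{\Ind} = (\pu{\DC}{\Ind})^\perp$, valid for any code $\DC$ (see e.g.\ \cite{MS86}); and third, that puncturing a GRS code yields the GRS code on the restricted support and multiplier.

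For the first fact, let $\DC \subseteq \Fqm^n$ be any code. The words of $\sh{(\DC \cap \Fq^n)}{\Ind}$ are exactly the vectors $\cv_{\Ind}$ with $\cv \in \DC$ vanishing on $\Ind$ and all entries in $\Fq$; as such a $\cv$ is zero on $\Ind$, it belongs to $\Fq^n$ if and only if $\cv_{\Ind} \in \Fq^{n - \card{\Ind}}$, whence $\sh{(\DC \cap \Fq^n)}{\Ind} = \sh{\DC}{\Ind} \cap \Fq^{n - \card{\Ind}}$. Applying this with $\DC = \GRS{r}{\xv}{\yv}^\perp$, then the duality identity with $\DC = \GRS{r}{\xv}{\yv}$, and finally the GRS puncturing identity --- which is immediate from the evaluation description of Definition~\ref{def:GRS} (puncturing simply discards the evaluation points indexed by $\Ind$, while $\xv_{\Ind}$ stays a support and $\yv_{\Ind}$ a multiplier) --- one obtains
\begin{align*}
\sh{\Alt{r}{\xv}{\yv}}{\Ind}
&= \sh{\GRS{r}{\xv}{\yv}^\perp}{\Ind} \cap \Fq^{n - \card{\Ind}} \\
&= \left(\pu{\GRS{r}{\xv}{\yv}}{\Ind}\right)^\perp \cap \Fq^{n - \card{\Ind}} \\
&= \GRS{r}{\xv_{\Ind}}{\yv_{\Ind}}^\perp \cap \Fq^{n - \card{\Ind}} \\
&= \Alt{r}{\xv_{\Ind}}{\yv_{\Ind}},
\end{align*}
the last equality being Definition~\ref{def:Alternant} applied to the length-$(n - \card{\Ind})$ parameters.

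I do not expect any genuine obstacle: the duality identity is textbook and the GRS puncturing identity is a one-line check, so the only point really deserving justification is the commutation of shortening with the subfield-subcode operation, which I have singled out above. (A harmless degenerate case is $n - \card{\Ind} < r$, where the shortened code and $\GRS{r}{\xv_{\Ind}}{\yv_{\Ind}}$ both reduce to the full space.) An alternative, entirely self-contained route is to argue directly on the polynomial description~\eqref{eq:description_alternant}: a codeword of $\Alt{r}{\xv}{\yv}$ vanishing on $\Ind$ corresponds to an $f \in \Fqm[z]_{<n-r}$ divisible by $\prod_{i \in \Ind}(z - x_i)$; writing $f = \bigl(\prod_{i \in \Ind}(z - x_i)\bigr) g$ with $\deg g < (n - \card{\Ind}) - r$ and using the product-rule identity $\loc{\xv}'(x_i) = \loc{\xv_{\Ind}}'(x_i) \prod_{j \in \Ind}(x_i - x_j)$ for $i \notin \Ind$, the restricted codeword takes the form $\bigl(g(x_i) / (\loc{\xv_{\Ind}}'(x_i) y_i)\bigr)_{i \notin \Ind}$, i.e.\ a generic element of $\Alt{r}{\xv_{\Ind}}{\yv_{\Ind}}$.
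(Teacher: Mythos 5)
Your proof is correct. Note that the paper itself does not prove this proposition: it simply cites \cite[Proposition 9]{COT17}, so there is no in-text argument to compare against. Your self-contained derivation is sound: the three ingredients (commutation of shortening with the subfield-subcode operation, the classical identity $\sh{\DC^\perp}{\Ind} = (\pu{\DC}{\Ind})^\perp$, and the fact that puncturing a GRS code in its evaluation description yields $\GRS{r}{\xv_{\Ind}}{\yv_{\Ind}}$) are each verified or verifiable in one line, and the chain of equalities starting from $\Alt{r}{\xv}{\yv} = \GRS{r}{\xv}{\yv}^\perp \cap \Fq^n$ goes through without a gap. You correctly identified that the only step needing explicit justification is the first one, and your argument for it (the entries on $\Ind$ are zero, hence automatically in $\Fq$) is exactly right. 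The alternative route via the polynomial description~(\ref{eq:description_alternant}) and the factorisation $\loc{\xv}'(x_i) = \loc{\xv_{\Ind}}'(x_i)\prod_{j \in \Ind}(x_i - x_j)$ is also valid and is closer in spirit to how such statements are usually proved in the cited reference; it has the advantage of exhibiting the support and multiplier of the shortened code directly, at the cost of a slightly more computational verification.
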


\begin{proof}
  See for instance \cite[Proposition 9]{COT17}. \qed
\end{proof}

\subsection{Quasi--dyadic codes, quasi-dyadic alternant codes}

Quasi--dyadic (QD) codes are codes with a nontrivial permutation group
isomorphic to $(\Z/2\Z)^\gamma$ for some positive integer
$\gamma$. Such a code has length $n = 2^\gamma n_0$.
The permutation group of the code is composed of permutations,
each one being a product of transpositions with disjoint supports.
The example of interest in the present article is the case of
QD--alternant codes. In what follows, we explain how to create them.

\begin{nota}
From now on, $q$ denotes a power of $2$ and $\ell$ denotes the
positive integer such that $q = 2^\ell$.
\end{nota}
% Since
% our attack focuses on QD--alternant codes over a quadratic extension
% and for the sake of simplicity, we limit our description to the case of
% QD--alternant codes with extension degree $m = 2$. The construction
% below generalizes easily to an arbitrary extension degree.

\begin{itemize}
\item Let $\group \subset \Fqm$ be an additive subgroup with $\gamma$ generators,
  i.e. $\group$ is an $\F_2$--vector subspace of $\Fqm$ of dimension $\gamma$
  with an $\F_2$--basis $a_1, \ldots, a_\gamma$.
  Clearly, as an additive group, $\group$ is isomorphic to $(\Z/2\Z)^\gamma$.
  The group $\group$ acts on $\Fqm$ by translation:
  for any  $a \in \group$, we denote by $\tau_a$ the translation 
  $$
  \tau_a : \map{\Fqm}{\Fqm}{x}{x+a}.
  $$
\item Using the basis $(a_1, \ldots, a_\gamma)$, we fix an ordering in $\group$
  as follows. Any element $u_1 a_1 + \cdots + u_\gamma a_\gamma\in \group$ can
  be regarded as an element $(u_1, \ldots, u_\gamma)\in(\Z/2\Z)^\gamma$
  and we sort them by lexicographic order. For instance, if $\gamma=3$:
  $$
  0 < a_1 < a_2 < a_1 + a_2 < a_3 < a_1 + a_3 < a_2 + a_3 < 
  a_1 + a_2 + a_3.
  $$
\item Let $n = 2^\gamma n_0$ for some positive $n_0$ and such that
  $n \leq q^m$. Let $\xv \in \Fqm^n$ be a support which
  splits into $n_0$ blocks of $2^\gamma$ elements of $\Fqm$, each
  block being an orbit under the action of $\group$ by translation on
  $\Fqm$ sorted using
  the previously described ordering.
  For instance, suppose $\gamma = 2$, then such an $\xv$
  is of the form,
  \begin{equation}\label{eq:x_example}
    \begin{array}{rl}
  \xv  = & (t_1, t_1+a_1, t_1+a_2, t_1+a_1+a_2, \ldots,\\
  & \qquad \ldots, t_{n_0}, t_{n_0}+a_1, t_{n_0}+a_2, t_{n_0}+a_1+a_2), 
    \end{array}
  \end{equation}
  where the $t_i$'s are chosen to have disjoint orbits
  under the action of $\group$ by translation on $\Fqm$.
\item Let $\yv \in \Fqm^n$ be a multiplier which also splits into $n_0$
  blocks of length $2^\gamma$ whose entries are equal.
\item Let $r$ be a positive integer and consider the code
  $\Alt{r}{\xv}{\yv}$.
\item The set of entries of $\xv$ is globally invariant under the
  action of $\group$ by translation. In particular, for any
  $a \in \group$, the translation $\tau_a$ induces a permutation of
  the code $\Alt{r}{\xv}{\yv}$.
  %This permutation is a product of
  %$\frac n 2$ transpositions with disjoint supports.
  We refer this permutation to as $\sigma_a$.
  For instance,
  reconsidering Example~(\ref{eq:x_example}), the permutations
  $\sigma_{a_1}$ and $\sigma_{a_1+a_2}$ are respectively of the form
  \begin{align*}
  \sigma_{a_1} & = (1,2)(3,4) \cdots (n-3, n-2)(n-1, n)\\
  \sigma_{a_1+a_2} &= (1, 4)(2, 3) \cdots (n-3, n)(n-2, n-1).
  \end{align*}
  The group of permutations $\{\sigma_a ~|~ a\in \group\}$ is
  isomorphic to $\group$ and hence to $(\Z/2\Z)^\gamma$.
  % Since it is isomorphic to $\group$ and in order to
  % limit the amount of notation,
  For convenience,
  we also denote this group
  of permutations by $\group$.
\end{itemize}

\begin{proposition}
  For any $r >0$, the code $\Alt{r}{\xv}{\yv}$
  is quasi--dyadic.
\end{proposition}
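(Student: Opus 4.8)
The plan is to show that every permutation $\sigma_a$, $a\in\group$, lies in the automorphism group of $\Alt{r}{\xv}{\yv}$, and that these $\sigma_a$'s already form a permutation group of the shape demanded by the definition of quasi--dyadicity. Everything reduces to understanding $\sigma_a$ explicitly and then chasing it through the definition of an alternant code.

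\textbf{Step 1: a concrete description of $\sigma_a$.} Since $\xv$ is the concatenation of $n_0$ blocks, each block being a full $\group$--orbit under translation sorted by the fixed lexicographic ordering of $(\Z/2\Z)^\gamma$, the translation $\tau_a$ globally fixes the set of coordinates of each block and permutes the positions inside every block in \emph{the same} way, namely as the map $v\mapsto v+a$ on $(\Z/2\Z)^\gamma$. Thus $\sigma_a$ is the index permutation characterised by $x_{\sigma_a(i)}=x_i+a$; it maps block $\ell$ to block $\ell$, satisfies $\sigma_a\sigma_b=\sigma_{a+b}$ and $\sigma_0=\mathrm{id}$, and for $a\ne 0$ it is fixed--point--free (the equation $x_i+a=x_i$ has no solution), hence an involution which is a product of transpositions with pairwise disjoint supports. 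In particular $a\mapsto\sigma_a$ is an isomorphism of $\group$ onto a subgroup of $S_n$ isomorphic to $(\Z/2\Z)^\gamma$, as already observed after~(\ref{eq:x_example}).

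\textbf{Step 2: $\sigma_a$ stabilises the GRS code.} Because $\yv$ is constant on each block and $\sigma_a$ preserves the blocks, $y_{\sigma_a(i)}=y_i$ for all $i$. Hence, for $f\in\Fqm[z]_{<r}$, permuting the codeword $(y_i f(x_i))_{1\le i\le n}$ of $\GRS{r}{\xv}{\yv}$ by $\sigma_a$ produces the vector whose $i$--th entry equals $y_{\sigma_a(i)}f(x_{\sigma_a(i)})=y_i f(x_i+a)=y_i g(x_i)$, where $g(z)\eqdef f(z+a)$. Since $g$ has the same degree as $f$, it lies in $\Fqm[z]_{<r}$, so the permuted word is again a codeword of $\GRS{r}{\xv}{\yv}$; as $\sigma_a$ is a bijection, $\sigma_a$ stabilises $\GRS{r}{\xv}{\yv}$.

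\textbf{Step 3: transfer to the alternant code and conclude.} A coordinate permutation preserves the canonical bilinear form on $\Fqm^n$, hence commutes with taking duals, and it trivially commutes with intersecting by $\Fq^n$. Combining this with Step~2 and Definition~\ref{def:Alternant} gives $\sigma_a\bigl(\Alt{r}{\xv}{\yv}\bigr)=\sigma_a\bigl(\GRS{r}{\xv}{\yv}^\perp\bigr)\cap\Fq^n=\GRS{r}{\xv}{\yv}^\perp\cap\Fq^n=\Alt{r}{\xv}{\yv}$. Therefore $\{\sigma_a\mid a\in\group\}\cong(\Z/2\Z)^\gamma$ is a subgroup of the permutation group of $\Alt{r}{\xv}{\yv}$ whose non--trivial elements are products of transpositions with disjoint supports, and since the code has length $n=2^\gamma n_0$, it is quasi--dyadic. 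The only delicate point — and essentially the sole obstacle — is the bookkeeping in Step~1: one must be certain that the orbits are sorted identically in all $n_0$ blocks, so that a single permutation of $(\Z/2\Z)^\gamma$ describes the action of $\sigma_a$ on every block; this is exactly what the construction of $\xv$ in~(\ref{eq:x_example}) guarantees.
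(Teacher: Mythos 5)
Your proof is correct and complete. The paper itself gives no argument for this proposition, deferring entirely to a citation, and what you have written is precisely the standard argument one would find there: the translation $\tau_a$ induces a block-preserving, fixed-point-free involution $\sigma_a$ of the support indices, the substitution $f(z)\mapsto f(z+a)$ preserves $\Fqm[z]_{<r}$ so $\sigma_a$ stabilises $\GRS{r}{\xv}{\yv}$, and stability passes to the dual and to the subfield subcode, yielding a permutation subgroup isomorphic to $(\Z/2\Z)^\gamma$ of the required dyadic shape. Your closing remark correctly identifies the only point requiring care, namely that the identical ordering of the $\group$-orbit inside every block is what makes a single permutation of $(\Z/2\Z)^\gamma$ describe the action on all blocks simultaneously.
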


\begin{proof}
  See for instance \cite[Chapter 5]{P15}. \qed
\end{proof}

\subsection{Invariant subcode of a quasi--dyadic
  code}\label{ss:inv_alternant}

\begin{definition}\label{def:invariant_code}
  Given a code $\CC$ with a non--trivial permutation group $\group$,
  we define the code $\inv{\CC}{\group}$ as the subcode
  of $\CC$:
$$
\inv{\CC}{\group} \eqdef \{\cv \in \CC ~|~ \forall \sigma \in \group,\
\sigma(\cv) = \cv\}.
$$
\end{definition}

The invariant subcode has repeated entries since on any orbit of the
support under the action of $\group$, the entries of a codeword are
equal.  This motivates an alternative definition of the invariant code
where repetitions have been removed.

\begin{definition}\label{def:punct_inv}
  In the context of Definition~\ref{def:invariant_code},
  let $\cv \in \Fqm^n$ be a vector such that
  for any $\sigma \in \group$, $\sigma (\cv) = \cv$.
  We denote by $\overline{\cv}$ the vector
  obtained by keeping only one entry per orbit under the
  action of $\group$ on the support.
  We define the {\em
    invariant code with non repeated entries}
  as
  $$
  \punctinv{\CC}{\group} \eqdef \left\{
   { \overline{\cv} ~|~ \cv \in  \inv{\CC}{\group}}
  \right\}.
  $$
\end{definition}

We are interested in the structure of invariant of QD alternant codes.
To study this structure, we first need to recall some basic notions
of additive polynomials.

\subsubsection{Additive polynomials}

\begin{definition}
  An {\em additive polynomial} $P\in \Fqm[z]$ is a polynomial whose
  monomials are all of the form $z^{2^i}$ for $i \geq 0$.
  Such a polynomial satisfies
  $P(a+b) = P(a)+ P(b)$ for any $a, b \in \Fqm$.
\end{definition}

The zero locus of an additive polynomial in $\Fqm$ is an additive
subgroup of $\Fqm$ and such polynomials satisfy some interpolation
properties.

\begin{proposition}
  Let $\group \subset \F_{q^m}$ be an additive group of cardinality
  $2^\gamma$.  There exists a unique additive polynomial
  $\psi_\group \in \Fqm[z]$ which is monic of degree $2^\gamma$ and
  vanishes at any element of $\group$.
\end{proposition}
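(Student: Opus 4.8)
The plan is to prove existence and uniqueness separately, using a dimension count on the space of additive polynomials.

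First, for existence, I would proceed by induction on $\gamma$. The base case $\gamma = 0$ gives $\group = \{0\}$ and $\psi_\group(z) = z$, which is monic, additive, of degree $1 = 2^0$, and vanishes on $\group$. For the inductive step, write $\group$ as a disjoint union $\group = \group' \sqcup (b + \group')$ where $\group'$ is an index-$2$ additive subgroup and $b \in \group \setminus \group'$. By induction there is a monic additive $\psi_{\group'}$ of degree $2^{\gamma-1}$ vanishing on $\group'$. The key observation is that $\psi_{\group'}$, being additive, maps the group $\group$ onto the two-element set $\{0, \psi_{\group'}(b)\}$ (the image has size $|\group| / |\group'| = 2$ since the kernel restricted to $\group$ is exactly $\group'$). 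Therefore I would set
$$
\psi_\group(z) \eqdef \psi_{\group'}(z)\left(\psi_{\group'}(z) - \psi_{\group'}(b)\right) = \psi_{\group'}(z)^2 - \psi_{\group'}(b)\,\psi_{\group'}(z).
$$
This vanishes on all of $\group$ by construction. It is monic of degree $2 \cdot 2^{\gamma - 1} = 2^{\gamma}$. Finally it is additive: the square of an additive polynomial over a field of characteristic $2$ is again additive (since $(u+v)^2 = u^2 + v^2$), and an $\Fqm$-linear combination of additive polynomials is additive, so both $\psi_{\group'}(z)^2$ and $\psi_{\group'}(b)\psi_{\group'}(z)$ are additive, hence so is their difference.

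For uniqueness, suppose $\psi_1, \psi_2$ are both monic additive of degree $2^\gamma$ vanishing on $\group$. Their difference $\psi_1 - \psi_2$ is an additive polynomial of degree at most $2^{\gamma - 1}$ (the leading monomials $z^{2^\gamma}$ cancel, and all lower monomials of an additive polynomial have degree a power of $2$, so the next possible degree is $2^{\gamma-1}$), and it vanishes on all $2^\gamma$ elements of $\group$. A nonzero polynomial of degree at most $2^{\gamma-1} < 2^\gamma$ cannot have $2^\gamma$ distinct roots, so $\psi_1 - \psi_2 = 0$.

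I expect the main subtlety to be the additivity of the polynomial constructed in the existence step, specifically keeping track of the fact that forming $\psi_{\group'}(z)^2$ preserves additivity in characteristic $2$ but would fail in other characteristics; this is exactly where the standing hypothesis that $q$ is a power of $2$ (so $\Fqm$ has characteristic $2$) gets used. Everything else is a routine degree count. An alternative to the inductive construction is to simply observe that the set of additive polynomials of degree at most $2^\gamma$ that vanish on $\group$ is, after fixing the monic leading term, an affine space, and that $\prod_{a \in \group}(z - a)$ lies in it — but verifying directly that this product is additive still amounts to the same characteristic-$2$ argument, so the inductive proof is cleaner.
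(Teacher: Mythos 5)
Your proof is correct. The paper itself does not argue this proposition at all --- it simply cites Goss (\emph{Basic Structures of Function Field Arithmetic}, Prop.~1.3.5 and Lemma~1.3.6) --- so you have supplied a self-contained argument where the paper defers to the literature. Your inductive construction is the classical one: writing $\group = \group' \sqcup (b+\group')$ and setting $\psi_\group = \psi_{\group'}^{\,2} - \psi_{\group'}(b)\,\psi_{\group'}$ is exactly the composition $\psi_{\{0,\psi_{\group'}(b)\}}\circ\psi_{\group'}$, i.e.\ the standard factorisation of additive polynomials through subgroups, and your characteristic-$2$ verification that squaring preserves additivity (monomials $z^{2^i}$ go to $z^{2^{i+1}}$, cross terms vanish) is the right justification. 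The uniqueness argument by root counting on the difference is also fine; note that you do not even need the difference to be additive, only that its degree is below $2^\gamma$. The one point you assert without spelling out --- that the kernel of $\psi_{\group'}$ meets $\group$ exactly in $\group'$ --- follows from the degree bound (the zero set of $\psi_{\group'}$ has at most $2^{\gamma-1}$ elements and already contains $\group'$), but as you implicitly observe, the vanishing of $\psi_\group$ on $b+\group'$ only uses $\psi_{\group'}(b+g')=\psi_{\group'}(b)$, so nothing hinges on it. Goss's route, by contrast, works with the explicit product $z\prod_{0\neq a\in\group}(1-z/a)$ and proves additivity directly from the subgroup property of the root set; your induction avoids that verification at the cost of an (easy) existence of an index-$2$ subgroup.
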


\begin{proof}
  %%% Begin cut ACr
%   Let $a_1, \ldots, a_\gamma$ be a set of
%   generators of $\group$.  The polynomial $\psi_\group$ can be
%   constructed using the so--called {\em Moore determinant}:
% $$
% \psi_\group(z) \eqdef
% {\left|
%   \begin{array}{ccc}
%     a_1 & \cdots & a_{\gamma} \\
%     \vdots &       & \vdots \\
%     a_1^{2^{{\gamma}-1}} & \cdots & a_{\gamma}^{2^{{\gamma}-1}}
%   \end{array}
% \right|}^{-1}\cdot
% \left|
%   \begin{array}{cccc}
%     a_1 & \cdots & a_{\gamma} & z \\
%     a_1^2 & \cdots & a_{\gamma}^2 & z^2 \\
%     \vdots &       & \vdots & \vdots \\
%     a_1^{2^{\gamma}} & \cdots & a_{\gamma}^{2^{\gamma}} & z^{2^{\gamma}}
%   \end{array}
% \right|.
% $$
% See \cite[Proposition 1.3.5]{G96b} for further details.
%%% End cut ACr
See \cite[Proposition 1.3.5 \& Lemma 1.3.6]{G96b}.
\qed
\end{proof}

\begin{nota}\label{nota:psi_G}
  From now on, given an additive subgroup $\group \subseteq \Fqm$,
  we always denote by $\psi_\group$ the unique monic additive polynomial
  of degree $|\group|$ in $\Fqm[z]$ that vanishes on $\group$.
\end{nota}

\subsubsection{Invariant of a quasi--dyadic alternant code}
It turns out that the invariant code with non repeated entries of a QD
alternant code is an alternant
code too.  This relies on the following classical result of invariant
theory for which a simple proof can be found in \cite{FOPPT16}.

\begin{theorem}\label{thm:invariant_poly}
  Let $f \in \Fqm[z]$ and $\group \subset \Fqm$ be an additive
  subgroup.  Suppose that for any $a \in \group$, $f(z) =
  f(z+a)$.
  Then, there exists $h \in \Fqm[z]$ such that
  $f(z) = h(\psi_\group(z))$, where $\psi_\group$ is the monic
  additive polynomial of degree $|\group|$ vanishing at any element of
  $\group$.
\end{theorem}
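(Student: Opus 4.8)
The plan is to induct on $\deg f$, with Euclidean division by $\psi_\group$ as the engine. Two elementary observations will be used throughout. First, $\psi_\group$ is monic of degree $2^\gamma = |\group|$, so division by it is legitimate in $\Fqm[z]$. Second, for every $a \in \group$ one has $\psi_\group(z+a) = \psi_\group(z)$ as polynomials: since $\psi_\group$ is additive and vanishes on $\group$, $\psi_\group(z+a) = \psi_\group(z) + \psi_\group(a) = \psi_\group(z)$.

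I would first record the base case: a polynomial $g$ fixed by all translations $\tau_a$, $a \in \group$, and with $\deg g < 2^\gamma$ is necessarily constant. Indeed, evaluating $g(z) = g(z+a)$ at $z = 0$ gives $g(a) = g(0)$ for every $a \in \group$, so $g - g(0)$ has at least $|\group| = 2^\gamma$ roots while its degree is $< 2^\gamma$; hence $g = g(0)$. This already proves the theorem when $\deg f < 2^\gamma$ (take $h := f(0)$), and it covers $f = 0$.

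For the inductive step I would assume $\deg f \geq 2^\gamma$ and that the statement holds for all polynomials of smaller degree. Write $f = Q\,\psi_\group + R$ with $\deg R < 2^\gamma$. Fixing $a \in \group$, the two divisions $f(z) = Q(z)\psi_\group(z) + R(z)$ and $f(z+a) = Q(z+a)\psi_\group(z) + R(z+a)$ (using $\psi_\group(z+a) = \psi_\group(z)$), combined with $f(z) = f(z+a)$, give
$$
\big(Q(z) - Q(z+a)\big)\,\psi_\group(z) \;=\; R(z+a) - R(z).
$$
The right-hand side has degree $< 2^\gamma$; if $Q(z) \neq Q(z+a)$ then, $\Fqm[z]$ being an integral domain, the left-hand side has degree $\geq \deg \psi_\group = 2^\gamma$ — a contradiction. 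Hence $Q(z) = Q(z+a)$, and therefore $R(z+a) = R(z)$, for every $a \in \group$. Now $R$ is $\group$-invariant of degree $< 2^\gamma$, so by the base case $R = c$ for some $c \in \Fqm$; and $Q$ is $\group$-invariant with $\deg Q = \deg f - 2^\gamma < \deg f$, so by the induction hypothesis $Q(z) = h_1(\psi_\group(z))$ for some $h_1 \in \Fqm[z]$. Then $f(z) = h_1(\psi_\group(z))\,\psi_\group(z) + c = h(\psi_\group(z))$ with $h(z) := z\,h_1(z) + c \in \Fqm[z]$, completing the induction.

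The only delicate point is the degree comparison in the displayed identity: one must use that a nonzero multiple of the degree-$2^\gamma$ polynomial $\psi_\group$ has degree at least $2^\gamma$, so its equality with a polynomial of strictly smaller degree forces both sides to vanish; everything else is bookkeeping. One could instead peel off the generators of $\group$ one at a time along a flag $\{0\} = \group_0 \subset \cdots \subset \group_\gamma = \group$, using at each stage that a polynomial identity $g(\phi(z)) = 0$ with $\phi$ non-constant forces $g = 0$; but the direct division argument above is shorter and avoids choosing a flag.
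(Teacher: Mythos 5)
Your proof is correct. The paper does not prove Theorem~\ref{thm:invariant_poly} itself but defers to the reference cited there, and the argument given in that reference is essentially the one you wrote: induction on $\deg f$ via Euclidean division by $\psi_\group$, using that $\psi_\group(z+a)=\psi_\group(z)$ as polynomials (which, as you note, follows from the monomials of $\psi_\group$ all being of the form $z^{2^i}$ in characteristic $2$) and that a $\group$--invariant polynomial of degree less than $|\group|$ is constant.
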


This entails the following result on the structure of the invariant code
of an alternant code. We refer to Definition~\ref{def:punct_inv}
for the notation in the following statement.

\begin{theorem}\label{thm:invariant_alternant}
  Let $\CC = \Alt{r}{\xv}{\yv}$ be a $QD$--alternant code with
  permutation group $\group$ of order $2^\gamma$.
  Set $r' = \left\lfloor \frac{r}{2^{\gamma}} \right\rfloor$.
  Then,
  $$
  \punctinv{\CC}{\group} = \Alt{r'}{\overline{\psi_\group(\xv)}}{
    \overline{\yv}},
  $$
\end{theorem}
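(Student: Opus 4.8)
The plan is to unwind both sides through their polynomial descriptions and match them up. First I would recall from~\eqref{eq:dual_alternant} that the dual of $\CC = \Alt{r}{\xv}{\yv}$ is the trace code $\tr_{\Fqm/\Fq}\left(\csetstar{(y_i g(x_i))_i}{g \in \Fqm[z]_{<r}}\right)$. By Delsarte's theorem (Theorem~\ref{thm:Delsarte}), computing $\inv{\CC}{\group}$ and then dualizing is a good route, but it is cleaner to work directly with the primal description~\eqref{eq:description_alternant}: a codeword of $\CC$ is $\left(\frac{f(x_i)}{\loc{\xv}'(x_i) y_i}\right)_{i}$ for some $f \in \Fqm[z]_{<n-r}$, subject to the constraint that the vector lies in $\Fq^n$. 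The key observation is that, because $\xv$ is a union of $\group$-orbits sorted consistently, the permutation $\sigma_a$ acts on the evaluation vector of $f$ essentially by replacing $f$ with $f(\cdot + a)$; one has to be slightly careful because the multiplier $\yv$ and the factor $\loc{\xv}'$ also sit in the expression, but $\yv$ is constant on each block by construction and $\loc{\xv}$ is $\group$-invariant as a polynomial (its root set is a union of $\group$-orbits), hence so is $\loc{\xv}'$ up to the predictable behaviour of derivatives under additive shifts --- in fact $\loc{\xv}(z+a) = \loc{\xv}(z)$ for $a\in\group$, so $\loc{\xv}'$ is genuinely $\group$-invariant.

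Next I would use this to characterize when a codeword is $\group$-invariant: the vector $\left(\frac{f(x_i)}{\loc{\xv}'(x_i)y_i}\right)_i$ is fixed by all $\sigma_a$ iff the associated evaluation values are constant on each $\group$-orbit, which (since the $x_i$ range over the full orbit $t_j + \group$) forces $f(t+a) = f(t)$ for all $a \in \group$ and all relevant $t$. Because there are enough orbits, this should be upgradable to the polynomial identity $f(z) = f(z+a)$ for all $a \in \group$, and then Theorem~\ref{thm:invariant_poly} gives $f(z) = h(\psi_\group(z))$ for some $h \in \Fqm[z]$. The degree bookkeeping is then: $\deg f = 2^\gamma \deg h < n - r$, i.e. $\deg h < \frac{n-r}{2^\gamma} = n_0 - \frac{r}{2^\gamma}$, so $\deg h \le n_0 - 1 - \left\lfloor \frac{r}{2^\gamma}\right\rfloor = n_0 - 1 - r'$, hence $h \in \Fqm[z]_{<n_0 - r'}$. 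Writing $\overline{\cv}$ for the de-repeated codeword, its $j$-th entry is $\frac{h(\psi_\group(t_j))}{\loc{\xv}'(t_j) y_j}$, and I would need to check that $\loc{\xv}'(t_j)$ is, up to the constant $\overline{\loc{\overline{\psi_\group(\xv)}}}'$-type factor absorbed into the multiplier, exactly the quantity that makes this the $j$-th entry of an element of $\Alt{r'}{\overline{\psi_\group(\xv)}}{\overline{\yv}}$ via~\eqref{eq:description_alternant}. Here the clean identity to establish is $\psi_\group'(z) = $ const (indeed $\psi_\group$ is additive, so $\psi_\group'$ is a nonzero constant in characteristic $2$), together with $\loc{\xv}(z) = \prod_j \psi_\group(z - t_j) \cdot(\text{unit})$, which relates $\loc{\xv}'(t_j)$ to $\prod_{j'\ne j}\psi_\group(t_j - t_{j'})$, i.e. to the $\loc{}'$ of the reduced support $\overline{\psi_\group(\xv)}$.

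Putting the pieces together: the subfield condition "lies in $\Fq^n$" for the invariant codeword translates, after de-repetition, into "lies in $\Fq^{n_0}$", and the set of reduced evaluation vectors $\left(\frac{h(\psi_\group(t_j))}{\loc{\xv}'(t_j)y_j}\right)_j$ with $h$ ranging over $\Fqm[z]_{<n_0-r'}$ is precisely $\GRS{r'}{\overline{\psi_\group(\xv)}}{\overline{\yv}'}^\perp$ for the appropriate dual multiplier, by Lemma~\ref{lem:dual_GRS} applied to the support $\overline{\psi_\group(\xv)}$; intersecting with $\Fq^{n_0}$ gives $\Alt{r'}{\overline{\psi_\group(\xv)}}{\overline{\yv}}$ (the distinction between $\overline{\yv}$ and its dual twist being immaterial since both yield the same alternant code). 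I expect the main obstacle to be the bookkeeping of the multiplier and the derivative factor --- verifying that the support transformation $\xv \mapsto \overline{\psi_\group(\xv)}$ converts $\loc{\xv}'$ on representatives into $\loc{\overline{\psi_\group(\xv)}}'$ up to a global nonzero constant that can be folded into $\overline{\yv}$ --- rather than the invariant-theoretic core, which is handed to us by Theorem~\ref{thm:invariant_poly}. One should also double-check the edge cases where the floor $r' = \lfloor r/2^\gamma\rfloor$ is not tight, but the degree inequality above handles this uniformly.
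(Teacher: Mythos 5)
Your plan is correct, and it supplies an argument that the paper itself does not: the paper's ``proof'' of Theorem~\ref{thm:invariant_alternant} is only a pointer to the first author's thesis. Your route is the natural one and is exactly the machinery the paper deploys in its detailed proof of Theorem~\ref{thm:deg_inv}: the primal description~\eqref{eq:description_alternant}, Theorem~\ref{thm:invariant_poly} to write an invariant $f$ as $h\circ\psi_\group$, and degree bookkeeping. The two points you flag as ``to check'' do go through, and it is worth recording why. First, the upgrade from $f(x_i+a)=f(x_i)$ at the support points to the identity $f(z+a)=f(z)$ is legitimate because $f(z+a)-f(z)$ has degree $<n-r\leq n$ and vanishes at the $n$ distinct points $x_1,\dots,x_n$ (this uses that $\loc{\xv}'(z+a)=\loc{\xv}'(z)$ and that $\yv$ is constant on blocks, so invariance of the codeword really does translate into invariance of the evaluations of $f$). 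Second, the multiplier bookkeeping works because $\loc{\xv}(z)=\prod_j\psi_\group(z-t_j)=\loc{\overline{\psi_\group(\xv)}}\bigl(\psi_\group(z)\bigr)$ by additivity of $\psi_\group$, whence $\loc{\xv}'(t_j)=\psi_\group'\cdot\loc{\overline{\psi_\group(\xv)}}'\bigl(\psi_\group(t_j)\bigr)$ with $\psi_\group'$ a nonzero constant (the coefficient of $z$ in the separable additive polynomial $\psi_\group$); that constant is absorbed into $h$, so it does not even perturb $\overline{\yv}$. Your degree count $\deg h\leq n_0-1-\lfloor r/2^\gamma\rfloor$ is the sharp one and handles the floor correctly in both directions of the inclusion, so the argument yields equality and not merely one containment.
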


\begin{proof}
  See \cite{B17}.\qed 
\end{proof}

\subsection{DAGS}\label{ss:DAGS}
Among the schemes recently submitted to NIST, the submission
DAGS~\cite{BBBCDGGHKONPR17} uses as a primitive a McEliece encryption
scheme based on QD generalised Srivastava codes.  It is well known
that generalised Srivastava codes form a subclass of alternant codes
\cite[Chapter 12]{MS86}. Therefore, this proposal lies in the scope of
the attack presented in what follows.

Parameters proposed in DAGS submission are listed in
Table~\ref{tab:Parameters_DAGS}.
\begin{table}[!h]
\renewcommand\arraystretch{1.3}
\addtolength{\tabcolsep}{.1cm}
  \centering
  \begin{tabular}{|c||c|c|c|c|c|c|c|c|}
    \hline
    Name & $q$ & $m$ & $n$ & $n_0$ & $k$ & $k_0$ & $\gamma$ &
         $r_0$ \\
    \hline \hline
    \dags{1} & $2^5$ & $2$ & $832$ & $52$ & $416$ & $26$ & $4$ & $13$ \\
    \hline
    \dags{3} & $2^6$ & 2 & 1216 & 38 & 512 & 16 & 5 & 11 \\
    \hline
    \dags{5} & $2^6$ & 2 & 2112 & 33 & 704 & 11 & 6 & 11 \\
    \hline 
  \end{tabular}
  \vspace{.2cm}
  \caption{Parameters proposed in DAGS.}
  \label{tab:Parameters_DAGS}
\renewcommand\arraystretch{1}
\end{table}

\noindent
Let us recall what do the parameters
$q, m, n, n_0, k, k_0, \gamma, r_0$ stand for:
\begin{itemize}
\item $q$ denotes the size of the base field of the alternant code;
\item $m$ denotes the extension degree. Hence the GRS code above the
  alternant code is defined over $\Fqm$;
\item $n$ denotes the length of the QD alternant code;
\item $n_0$ denotes the length of the invariant code with non repeated
  entries
  $\punctinv{\Alt{r}{\xv}{\yv}}{\group}$, where $\group$ denotes the 
  permutation group.
\item $k$ denotes the dimension of the QD alternant code;
\item $k_0$ denotes the dimension of the  invariant code;
\item $\gamma$ denotes the number of generators of $\group$, {\em i.e.}
  $\group \simeq (\Z/2\Z)^{\gamma}$;
\item $r_0$ denotes the degree of the  invariant code with non repeated
  entries, which is 
  alternant according to Theorem~\ref{thm:invariant_alternant}.
\end{itemize}

\begin{remark}
  The indexes ${\tt 1}, {\tt 3}$ and ${\tt 5}$ in the parameters names
  correspond to security levels according to NIST's call. Level 1, corresponds
  to 128 bits security with a classical computer, Level 3 to 192 bits security
  and Level 5 to 256 bits security.
\end{remark}

% \ac{Ce qui suit doit sans doute être repris plus tard avec soin.
%   Pour le moment, ça ressemble fortement à un bâton bien tendu pour se
%   faire taper dessus.}
In addition to the set of parameters of
Table~\ref{tab:Parameters_DAGS}, we introduce self chosen smaller
parameters listed in Table~\ref{tab:Parameters_BABYDAGS}. They {\bf do
  not} correspond to claimed secure instantiations of the scheme but
permitted to test some of our assumptions by computer aided
calculations.% which would be out of reach of our computing
% capabilities when tested on regular DAGS parameters as in
% Table~\ref{tab:Parameters_DAGS}.  For instance, when we argue 
% further that \dags 3, can be broken with an approximate work factor
% of $2^{90}$ operations, such an attack is out of reach of our computing
% capacities and hence we simulated this attack on small scale parameters
% to give evidences that the attack would work properly with the 
% estimated work factor.

\begin{table}[h]
  \centering
  \renewcommand\arraystretch{1.3}
  \addtolength{\tabcolsep}{.1cm}
  \begin{tabular}{|c||c|c|c|c|c|c|c|c|}
    \hline
    Name & $q$ & $m$ & $n$ & $n_0$ & $k$ & $k_0$ & $\gamma$ &
         $r_0$ \\
    \hline \hline
    \dags{0} & $2^4$ & $2$ & $240$ & $15$ & $80$ & $5$ & $4$ & $5$ \\
    \hline
  \end{tabular}
  \vspace{.2cm}
  \caption{Small scale parameters, {\bf not} proposed in DAGS.}
  \label{tab:Parameters_BABYDAGS}
  \renewcommand\arraystretch{1}
\end{table}
%\ac{Elise, tu pourras me confirmer les param\`etres ci-dessus stp?}

%%% Local Variables:
%%% mode: latex
%%% TeX-master: "Article"
%%% End:

\section{Schur products}\label{sec:Schur}

From now on and unless otherwise specified, the extension degree 
$m$ will be equal to $2$. This is the context of any proposed
parameters in DAGS.

\subsection{Product of vectors}
  The component wise product of two vectors in $\Fq^n$ is denoted by
\[
\av \star \bv \eqdef (a_1b_1, \ldots, a_n b_n).
\]
Next, for any positive integer $t$ we define $\av^{\star t}$ as 
$$
\av^{\star t} \eqdef
\underbrace{\av \star \cdots \star \av}_{t\ \textrm{times}}.
$$
More generally, given a polynomial $P\in \F_q[z]$ we define $P(\av)$
as the vector $(P(a_1), \ldots, P(a_n))$. In particular, given
$\av \in \F_{q^2}^n$, we denote by $\tr (\av)$ and $\nr(\av)$
the vectors obtained by
applying respectively the trace and the norm map component by component:
\begin{align*}
\tr(\av) &\eqdef (a_1 + a_1^q , \ldots, a_n + a_n^q)\\
\nr(\av) & \eqdef (a_1^{q+1}, \ldots, a_n^{q+1}).
\end{align*}
Finally, the all one vector $(1, \ldots, 1)$, which is the unit vector of
the algebra $\F_q^n$ with operations $+$ and $\star$ is denoted by $\onev$.

\subsection{Schur product of codes}
The {\em Schur product} of two codes $\code A$ and $\code B \subseteq \F_q^n$
is defined as
\[
\code A \star \code B \eqdef  {\left\langle
  \av \star \bv ~|~ \av \in \code A, \ \bv \in \code B
  \right\rangle}_{\Fq}.
\]
In particular, $\sq{\code A}$ denotes the \emph{square code} of a code $\code A$: $\sq{\code A}\eqdef \code A \star \code A$.

\subsection{Schur products of GRS and alternant codes}
The behaviour of GRS and of some alternant codes with respect to
the Schur product is very different from that of random codes.
This provides a manner to distinguish GRS codes from random ones
and leads to a cryptanalysis of GRS based encryption schemes
\cite{W10,CGGOT14,COTG15}. Some alternant codes, namely
Wild Goppa codes with extension degree 2 have been also subject
to a cryptanalysis based on Schur products computations \cite{COT14,COT17}.

Here we recall an elementary but crucial result.

\begin{theorem}\label{thm:prod_GRS}
  Let $\xv \in \Fqm^n$ be a support and
  $\yv, \yv' \in \Fqm^n$ be multipliers.
  Let $k, k'$ be two positive integers, then
  $$
  \GRS{k}{\xv}{\yv} \star \GRS{k'}{\xv}{\yv'} = 
  \GRS{k+k'-1}{\xv}{\yv \star \yv'}.
  $$
\end{theorem}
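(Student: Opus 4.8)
The plan is to reduce the statement to an elementary fact about products of univariate polynomials, and then establish the equality of codes by a double inclusion. Recall from Definition~\ref{def:GRS} that a codeword of $\GRS{k}{\xv}{\yv}$ is exactly a vector $\yv \star (f(x_1), \dots, f(x_n))$ for some polynomial $f$ of degree less than $k$, and likewise a codeword of $\GRS{k'}{\xv}{\yv'}$ is $\yv' \star (g(x_1), \dots, g(x_n))$ for some $g$ of degree less than $k'$. The Schur product of these two vectors is then $(\yv \star \yv') \star \big( (fg)(x_1), \dots, (fg)(x_n) \big)$, so the whole argument comes down to controlling the polynomial $fg$.

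First I would prove the inclusion $\GRS{k}{\xv}{\yv} \star \GRS{k'}{\xv}{\yv'} \subseteq \GRS{k+k'-1}{\xv}{\yv \star \yv'}$. Since the left--hand side is, by definition of the Schur product of codes, the $\F_q$--span of the products $\av \star \bv$ of individual codewords, it suffices to place each such product in the right--hand side. For $f, g$ as above one has $\deg(fg) \le (k-1) + (k'-1) = k + k' - 2 < k + k' - 1$, hence $(\yv \star \yv') \star \big((fg)(x_i)\big)_{1 \le i \le n}$ is by definition a codeword of $\GRS{k+k'-1}{\xv}{\yv \star \yv'}$, which settles this direction.

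For the reverse inclusion I would exhibit an explicit spanning set of $\GRS{k+k'-1}{\xv}{\yv \star \yv'}$ lying inside the Schur product. Using the monomial basis $1, z, \dots, z^{k+k'-2}$ of the space of polynomials of degree less than $k+k'-1$, this code is spanned by the vectors $(\yv \star \yv') \star (x_i^{j})_{1 \le i \le n}$ for $0 \le j \le k + k' - 2$. For each such $j$, write $j = a + b$ with $a = \min(j, k-1)$ and $b = j - a$; then $0 \le a \le k - 1$ and $0 \le b \le (k+k'-2) - (k-1) = k' - 1$, so $z^a$ has degree less than $k$ and $z^b$ degree less than $k'$. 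Consequently
\[
(\yv \star \yv') \star (x_i^{j})_{1 \le i \le n} = \big( \yv \star (x_i^{a})_{1 \le i \le n} \big) \star \big( \yv' \star (x_i^{b})_{1 \le i \le n} \big),
\]
which displays this generator as the Schur product of a codeword of $\GRS{k}{\xv}{\yv}$ and a codeword of $\GRS{k'}{\xv}{\yv'}$. Hence $\GRS{k+k'-1}{\xv}{\yv \star \yv'} \subseteq \GRS{k}{\xv}{\yv} \star \GRS{k'}{\xv}{\yv'}$, and combining the two inclusions yields the claimed equality.

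There is no genuine difficulty here; the two points that deserve a little attention are that the Schur product of codes is defined as a linear span and not literally as a set of pairwise products, so both inclusions must be checked on spanning sets rather than on arbitrary codewords, and the trivial degree count $j \le (k-1) + (k'-1)$, which is precisely what allows every monomial $z^j$ of degree less than $k+k'-1$ to be factored as $z^a z^b$ with $a < k$ and $b < k'$. If in addition one wants the dimension of the product code to be exactly $k + k' - 1$ rather than just the stated equality of codes, it suffices to invoke that the entries of the support $\xv$ are pairwise distinct, so that evaluation is injective on polynomials of degree less than $n$ and $\GRS{k+k'-1}{\xv}{\yv \star \yv'}$ has dimension $\min(k+k'-1, n)$, equal to $k + k' - 1$ in the relevant regime $k + k' - 1 \le n$.
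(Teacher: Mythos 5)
Your proof is correct and complete: the forward inclusion by the degree bound $\deg(fg)\le k+k'-2$ checked on pairwise products (which suffices since the Schur product is defined as a span), and the reverse inclusion by factoring each monomial generator $z^j$ as $z^az^b$ with $a<k$, $b<k'$. The paper itself gives no proof and simply cites \cite[Proposition 6]{CGGOT14}, whose argument is exactly this double inclusion, so your write-up matches the intended proof.
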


\begin{proof}
  See for instance \cite[Proposition 6]{CGGOT14}.\qed
\end{proof}

%\ac{A voir si on a la place et si on en a vraiment besoin, on citera CCMZ ici.}

%%% Local Variables:
%%% mode: latex
%%% TeX-master: "Article"
%%% End:

\section{Conductors} \label{sec:conductor}
In this section, we introduce a fundamental object in the
attack to follow. This object was already used in
\cite{CMP17,COT17} without being named. We chose here to 
call it {\em conductor}. The rationale behind this terminology
is explained in Remark~\ref{rem:conductor}.

\begin{definition}\label{def:conductor}
  Let $\CC$ and $\DC$ be two codes of length $n$ over $\F_q$.
  The {\em conductor of $\DC$ into $\CC$} is defined as the
  largest code $\ZC \subseteq \F_q^n$ such that $\DC \star \ZC \subseteq \CC$.
  That is:
  $$
  \cond{\DC}{\CC} \eqdef \{\uv \in \F_q^n ~|~ \uv \star \DC \subseteq \CC\}.
  $$
\end{definition}

\begin{proposition}\label{prop:formula_conductor}
  Let $\DC, \CC \subseteq \F_q^n$ be two codes, then
  $$
  \cond{\DC}{\CC} = {\left( \spc{\DC}{\CC^\perp} \right)}^\perp.
  $$
\end{proposition}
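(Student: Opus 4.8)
The plan is to prove the two inclusions separately, using the standard adjunction between the Schur product and the dual. First I would unfold the definitions: $\uv \in \cond{\DC}{\CC}$ means $\uv \star \dv \in \CC$ for every $\dv \in \DC$, which is equivalent to saying $\langle \uv \star \dv, \wv \rangle = 0$ for every $\dv \in \DC$ and every $\wv \in \CC^\perp$. The key computational identity is the ``associativity'' of the Schur product with respect to the standard inner product, namely
\[
\langle \uv \star \dv, \wv \rangle = \sum_{i=1}^n u_i d_i w_i = \langle \uv, \dv \star \wv \rangle .
\]
This holds for arbitrary vectors in $\F_q^n$ and is the heart of the argument.

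Using this identity, the condition $\langle \uv \star \dv, \wv \rangle = 0$ for all $\dv \in \DC$, $\wv \in \CC^\perp$ becomes $\langle \uv, \dv \star \wv \rangle = 0$ for all such $\dv, \wv$. Since the vectors $\dv \star \wv$ with $\dv \in \DC$, $\wv \in \CC^\perp$ span $\spc{\DC}{\CC^\perp}$ by definition of the Schur product of codes, this is precisely the statement that $\uv$ is orthogonal to $\spc{\DC}{\CC^\perp}$, i.e. $\uv \in \left(\spc{\DC}{\CC^\perp}\right)^\perp$. Thus I would establish $\cond{\DC}{\CC} \subseteq \left(\spc{\DC}{\CC^\perp}\right)^\perp$, and conversely, since every step in the chain of equivalences is reversible, the reverse inclusion follows as well, giving equality.

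A small point to be careful about is the passage from ``orthogonal to every generator $\dv \star \wv$'' to ``orthogonal to the whole span $\spc{\DC}{\CC^\perp}$'': these are equivalent because orthogonality to a set of vectors is equivalent to orthogonality to its linear span over $\F_q$, which is a routine fact about bilinear forms. Another detail worth a sentence is the use of the double-dual, $(\code A^\perp)^\perp = \code A$, but in fact the argument above never needs it — the whole proof is a direct chain of ``if and only if'' statements, so no real obstacle arises. The only thing one might view as the ``hard part'' is simply noticing the right reformulation; once the adjunction identity $\langle \uv \star \dv, \wv \rangle = \langle \uv, \dv \star \wv \rangle$ is written down, the proof is essentially immediate.

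\begin{proof}
  Let $\uv \in \F_q^n$. By definition, $\uv \in \cond{\DC}{\CC}$ if and
  only if $\uv \star \dv \in \CC$ for every $\dv \in \DC$, which, since
  $\CC = (\CC^\perp)^\perp$, holds if and only if
  $\langle \uv \star \dv, \wv\rangle = 0$ for every $\dv \in \DC$ and
  every $\wv \in \CC^\perp$. Now, for any $\uv, \dv, \wv \in \F_q^n$ one
  has the elementary identity
  \[
  \langle \uv \star \dv, \wv \rangle = \sum_{i=1}^n u_i d_i w_i
  = \langle \uv, \dv \star \wv \rangle .
  \]
  Therefore $\uv \in \cond{\DC}{\CC}$ if and only if
  $\langle \uv, \dv \star \wv\rangle = 0$ for every $\dv \in \DC$ and
  every $\wv \in \CC^\perp$. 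The vectors of the form $\dv \star \wv$
  with $\dv \in \DC$ and $\wv \in \CC^\perp$ generate the code
  $\spc{\DC}{\CC^\perp}$, and a vector is orthogonal to all of them if
  and only if it is orthogonal to their $\F_q$--span. Hence
  $\uv \in \cond{\DC}{\CC}$ if and only if
  $\uv \in \left(\spc{\DC}{\CC^\perp}\right)^\perp$, which proves the
  claimed equality. \qed
\end{proof}
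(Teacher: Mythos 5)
Your proof is correct and complete; the key adjunction $\langle \uv \star \dv, \wv\rangle = \langle \uv, \dv\star\wv\rangle$ together with the reduction from generators to their span is exactly the standard argument. The paper itself gives no proof here and simply cites \cite{CMP17,COT17}, where this same reasoning appears, so your write-up is a faithful self-contained version of the intended proof.
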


\begin{proof}
  See \cite{CMP17,COT17}. \qed
\end{proof}

\begin{remark}\label{rem:conductor}
  The terminology {\em conductor} has been borrowed from number theory
  in which the conductor of two subrings $\mathcal O, \mathcal O'$ of
  the ring of integers $\mathcal O_K$ of a number field $K$ is the
  largest ideal $\mathfrak P$ of $\mathcal O_K$ such that
  $\mathfrak P \cdot \mathcal O \subseteq \mathcal O'$.
\end{remark}

\subsection{Conductors of GRS codes}\label{ss:conductor_GRS}

\begin{proposition}\label{prop:cond_GRS}
  Let $\xv, \yv \in \Fqm^n$ be a support and a multiplier.
  Let $k \leq k'$ be two integers less than $n$. Then,
  $$
  \cond{\GRS{k}{\xv}{\yv}}{\GRS{k'}{\xv}{\yv}}
  = \RS{k'-k+1}{\xv}.
  $$
\end{proposition}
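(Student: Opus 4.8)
The plan is to prove the two inclusions separately, using the explicit polynomial descriptions of GRS codes together with Theorem~\ref{thm:prod_GRS}. For the easy inclusion $\RS{k'-k+1}{\xv} \subseteq \cond{\GRS{k}{\xv}{\yv}}{\GRS{k'}{\xv}{\yv}}$, I would take a word $\uv = (g(x_1), \ldots, g(x_n))$ with $g \in \F_q[z]_{<k'-k+1}$ and observe that for any $\av = (y_1 f(x_1), \ldots, y_n f(x_n)) \in \GRS{k}{\xv}{\yv}$ one has $\uv \star \av = (y_1 (fg)(x_1), \ldots, y_n (fg)(x_n))$ with $\deg(fg) < (k-1) + (k'-k+1) = k'$, hence $\uv \star \av \in \GRS{k'}{\xv}{\yv}$. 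Since this holds for all such $\av$, we get $\uv \in \cond{\GRS{k}{\xv}{\yv}}{\GRS{k'}{\xv}{\yv}}$ by Definition~\ref{def:conductor}. Alternatively, and perhaps more cleanly, this inclusion follows directly from Theorem~\ref{thm:prod_GRS}: $\RS{k'-k+1}{\xv} \star \GRS{k}{\xv}{\yv} = \GRS{k'-k+1+k-1}{\xv}{\onev \star \yv} = \GRS{k'}{\xv}{\yv}$.

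For the reverse inclusion, which is the real content, I would argue by a dimension count combined with the formula of Proposition~\ref{prop:formula_conductor}. By that proposition and Lemma~\ref{lem:dual_GRS},
$$
\cond{\GRS{k}{\xv}{\yv}}{\GRS{k'}{\xv}{\yv}} = \left(\GRS{k}{\xv}{\yv} \star \GRS{k'}{\xv}{\yv}^\perp\right)^\perp = \left(\GRS{k}{\xv}{\yv} \star \GRS{n-k'}{\xv}{\yv^\perp}\right)^\perp.
$$
Now apply Theorem~\ref{thm:prod_GRS} to the product inside: $\GRS{k}{\xv}{\yv} \star \GRS{n-k'}{\xv}{\yv^\perp} = \GRS{k+n-k'-1}{\xv}{\yv \star \yv^\perp}$. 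Taking the dual once more via Lemma~\ref{lem:dual_GRS} gives $\GRS{k+n-k'-1}{\xv}{\yv\star\yv^\perp}^\perp = \GRS{n-(k+n-k'-1)}{\xv}{(\yv\star\yv^\perp)^\perp} = \GRS{k'-k+1}{\xv}{(\yv\star\yv^\perp)^\perp}$. The last step is to identify the multiplier: one checks that $\yv \star \yv^\perp = \left(\tfrac{1}{\loc{\xv}'(x_1)}, \ldots, \tfrac{1}{\loc{\xv}'(x_n)}\right)$, so that $(\yv \star \yv^\perp)^\perp$, computed from the formula in Lemma~\ref{lem:dual_GRS} with multiplier $\yv\star\yv^\perp$, equals $\left(\tfrac{1}{\loc{\xv}'(x_i)\cdot \frac{1}{\loc{\xv}'(x_i)}}\right)_i = \onev$, whence $\GRS{k'-k+1}{\xv}{\onev} = \RS{k'-k+1}{\xv}$.

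The main obstacle, such as it is, is bookkeeping: making sure the degree bounds and index ranges in the repeated applications of Theorem~\ref{thm:prod_GRS} and Lemma~\ref{lem:dual_GRS} are consistent (in particular that $k + n - k' - 1 \le n$, which holds since $k \le k'$ implies $k - k' - 1 < 0$, and that this Schur product does not saturate to the whole space, which again is guaranteed by $k+n-k'-1 < n$), and correctly tracking how the multipliers transform under duality. The identity $\yv\star\yv^\perp = (1/\loc{\xv}'(x_i))_i$ is immediate from the definition of $\yv^\perp$ in Lemma~\ref{lem:dual_GRS}, and the observation that dualizing a GRS code twice with reciprocal multipliers returns $\onev$ is the crux. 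Because both the first inclusion (via Theorem~\ref{thm:prod_GRS}) and the dimension of the right-hand side of the displayed chain agree with $\RS{k'-k+1}{\xv}$, the two codes coincide. Note the hypothesis $k \le k'$ is used precisely to ensure $k'-k+1 \ge 1$, so that the right-hand side is a well-defined nonzero Reed--Solomon code.
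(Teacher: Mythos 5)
Your proposal is correct and follows essentially the same route as the paper: apply Proposition~\ref{prop:formula_conductor} together with Lemma~\ref{lem:dual_GRS} and Theorem~\ref{thm:prod_GRS} to compute the dual of the conductor as $\GRS{n-k'+k-1}{\xv}{\spc{\yv}{\yv^\perp}}$, observe $\spc{\yv}{\yv^\perp} = \bigl(1/\loc{\xv}'(x_i)\bigr)_i$, and dualize back to land on $\RS{k'-k+1}{\xv}$. The separate verification of the easy inclusion is harmless but redundant, since your chain of equalities already establishes both containments at once.
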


\begin{proof}
  Let $\EC$ denote the conductor.
  From Proposition~\ref{prop:formula_conductor} and
  Lemma~\ref{lem:dual_GRS}, 
  $$
  \EC^\bot = \spc{\GRS{k}{\xv}{\yv}}{\GRS{n-k'}{\xv}{\yv^\bot}}
   = \GRS{n-k'+k-1}{\xv}{\spc{\yv}{\yv^\bot}}.
  $$
  Note that
  $$
  \spc{\yv}{\yv^\bot} = \left( \frac{1}{\loc{\xv}' (x_1)}, \ldots, 
  \frac{1}{\loc{\xv}'(x_n)}\right).
  $$
  Then, using Lemma~\ref{lem:dual_GRS} again, we get
  $$
  \EC = \GRS{k'-k+1}{\xv}{{(\spc{\yv}{\yv^\bot})}^\bot} = \RS{k'-k+1}{\xv}.
  $$\qed
\end{proof}

Let us emphasize a very interesting aspect of
Proposition~\ref{prop:formula_conductor}. We considered the conductor 
of a GRS code into another one having the same support and multiplier.
The point is that the conductor {\bf does not depend on $\yv$}. Hence
the computation of a conductor permits to get rid of the multiplier
and to obtain a much easier code to study: a Reed--Solomon code.

\subsection{An illustrative example : recovering the structure
  of GRS codes using conductors}\label{ss:illustrative_GRS}

Before presenting the attack on QD--alternant codes, we
propose first to describe
a manner to recover the structure of a GRS code. This may help
the reader to understand the spirit the attack to follow.

Suppose we know
a generator matrix of a code
$\CC_k = \GRS{k}{\xv}{\yv}$ where $(\xv, \yv)$ are unknown.
In addition, suppose that we know
a generator matrix of the subcode $\CC_{k-1} = \GRS{k-1}{\xv}{\yv}$
which has codimension $1$ in $\CC_k$.
% We aim at finding a pair $(\xv', \yv')$ such that $\CC_k =
% \GRS{k}{\xv'}{\yv'}$ and we can proceed as follows.
First compute the conductor
$$
\XC = \cond{\CC_{k-1}}{\CC_k}.
$$
From Proposition~\ref{prop:cond_GRS}, the conductor $\XC$ equals
$\RS{2}{\xv}$. This code has dimension $2$ and is spanned by
$\onev$ and $\xv$.
We claim that, from the knowledge of $\XC$, a pair $(\xv', \yv')$
such that $\CC_k = \GRS{k}{\xv'}{\yv'}$
can be found easily by using techniques which are very similar
from those presented further in \S~\ref{ss:finishing}.

Of course, there is no reason that we could know both
$\GRS{k}{\xv}{\yv}$ and $\GRS{k-1}{\xv}{\yv}$. However, we will see
further that the quasi--dyadic structure permits to find interesting
subcodes whose conductor may reveal the secret structure of the code.

\subsection{Conductors of alternant codes}

When dealing with alternant codes, having an exact description
of the conductors like in Proposition~\ref{prop:cond_GRS} becomes
difficult. We can at least prove the following theorem.

\begin{proposition}\label{prop:Conductor_alternant}
  Let $\xv, \yv \in \Fqq^n$ be a support and a multiplier. Let $r'\geq r$
  be two positive integers.
  Then,
  \begin{equation}\label{eq:Conductor_alternant}
  \RS{r'-r+1}{\xv} \cap \Fq^n \subseteq
  \cond{\Alt{r'}{\xv}{\yv}}{\Alt{r}{\xv}{\yv}}.
  \end{equation}
\end{proposition}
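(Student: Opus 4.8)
The plan is to show that any vector of the form $\uv = P(\xv)$ with $P \in \Fqq[z]_{<r'-r+1}$ and $P(\xv) \in \Fq^n$ maps $\Alt{r'}{\xv}{\yv}$ into $\Alt{r}{\xv}{\yv}$ under the Schur product; since such vectors are exactly the elements of $\RS{r'-r+1}{\xv} \cap \Fq^n$, this gives the claimed inclusion. First I would unwind the definition of the conductor: it suffices to check that $\uv \star \cv \in \Alt{r}{\xv}{\yv}$ for every $\cv \in \Alt{r'}{\xv}{\yv}$. By Definition~\ref{def:Alternant}, $\Alt{r}{\xv}{\yv} = \GRS{r}{\xv}{\yv}^\perp \cap \Fq^n$, and $\uv \star \cv$ already has entries in $\Fq$ because both factors do; so the real content is to verify $\uv \star \cv \in \GRS{r}{\xv}{\yv}^\perp$, i.e. that $\uv \star \cv$ is orthogonal to $\GRS{r}{\xv}{\yv}$.

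The key computation is the following. Take $\cv \in \Alt{r'}{\xv}{\yv} \subseteq \GRS{r'}{\xv}{\yv}^\perp = \GRS{n-r'}{\xv}{\yv^\perp}$ by Lemma~\ref{lem:dual_GRS}. Then for any $g \in \Fqq[z]_{<r}$, the vector $\GRS{r}{\xv}{\yv}$-codeword is $(y_i g(x_i))_i$, and I compute the pairing
$$
\langle \uv \star \cv, (y_i g(x_i))_i \rangle = \sum_{i=1}^n P(x_i)\, c_i\, y_i\, g(x_i) = \langle \cv, (y_i\, P(x_i) g(x_i))_i \rangle.
$$
Now $P g$ has degree $< (r'-r+1) + r - 1 = r'$, so $(y_i (Pg)(x_i))_i$ is a codeword of $\GRS{r'}{\xv}{\yv}$, and since $\cv \in \GRS{r'}{\xv}{\yv}^\perp$ this pairing vanishes. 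Hence $\uv \star \cv \perp \GRS{r}{\xv}{\yv}$, so $\uv \star \cv \in \Alt{r}{\xv}{\yv}$, which is precisely what we needed. Taking the $\Fq$-span of all such $\uv$ (which is $\RS{r'-r+1}{\xv} \cap \Fq^n$, since the conductor and the left-hand side are both $\Fq$-linear) finishes the argument.

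I do not expect a serious obstacle here; the only points requiring minor care are (i) being explicit that $\RS{r'-r+1}{\xv} \cap \Fq^n$ is spanned over $\Fq$ by the vectors $P(\xv)$ with $P$ ranging over $\Fqq[z]_{<r'-r+1}$ subject to $P(\xv) \in \Fq^n$ — which is immediate from the definition of a Reed--Solomon code — and (ii) observing that to prove an inclusion of a subspace into the conductor it is enough to check the defining condition $\uv \star \DC \subseteq \CC$ on a spanning set, which is clear by bilinearity of $\star$. One could alternatively phrase the whole proof via Proposition~\ref{prop:formula_conductor}, writing the conductor as $(\Alt{r'}{\xv}{\yv} \star \Alt{r}{\xv}{\yv}^\perp)^\perp$ and bounding the Schur product using Theorem~\ref{thm:prod_GRS} together with $\Alt{r}{\xv}{\yv}^\perp = \tr_{\Fqq/\Fq}(\GRS{r}{\xv}{\yv})$ from \eqref{eq:dual_alternant}; but the direct orthogonality computation above is the cleanest route and generalizes transparently to the extension-degree-$2$ refinements used later.
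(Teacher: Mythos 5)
Your proof is correct. It takes a mildly but genuinely different route from the paper's: the paper first rewrites $\Alt{r'}{\xv}{\yv}$ as $\GRS{n-r'}{\xv}{\yv^\perp}\cap\Fq^n$ via Lemma~\ref{lem:dual_GRS} and then invokes the GRS Schur-product theorem (Theorem~\ref{thm:prod_GRS}) to land in $\GRS{n-r}{\xv}{\yv^\perp}\cap\Fq^n=\Alt{r}{\xv}{\yv}$, whereas you work entirely on the dual side: you pair $\uv\star\cv$ against a generic word $(y_ig(x_i))_i$ of $\GRS{r}{\xv}{\yv}$, move the multiplier $P(\xv)$ across the bilinear form, and conclude from $\deg(Pg)<r'$ and $\cv\in\GRS{r'}{\xv}{\yv}^\perp$. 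The underlying mechanism is the same degree bookkeeping that powers Theorem~\ref{thm:prod_GRS}, but your version is more self-contained: it never needs the dual multiplier $\yv^\perp$ or the derivative $\loc{\xv}'$, and it makes transparent that the statement is really the adjointness identity $\langle\uv\star\cv,\wv\rangle=\langle\cv,\uv\star\wv\rangle$ combined with $\Fqq[z]_{<r'-r+1}\cdot\Fqq[z]_{<r}\subseteq\Fqq[z]_{<r'}$. The paper's version buys reusability of the already-stated GRS product formula, which it needs again elsewhere (e.g.\ in Proposition~\ref{prop:cond_GRS} and in the heuristic's justification). Your two points of "minor care" are both fine: every element of $\RS{r'-r+1}{\xv}\cap\Fq^n$ is literally of the form $P(\xv)$ with $P\in\Fqq[z]_{<r'-r+1}$, so no spanning argument is even required, and bilinearity of $\star$ handles the reduction to single vectors.
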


\begin{proof}
Consider the Schur product
\begin{align*}
  \spc{\left(\RS{r'-r+1}{\xv} \cap \Fq^n \right)}{&\Alt{r'}{\xv}{\yv}} \\
    & = \spc{\left(\RS{r'-r+1}{\xv} \cap \Fq^n \right)}
            {(\GRS{n-r'}{\xv}{\yv^\perp} \cap \Fq^n)}\\
  & \subseteq (\spc{\RS{r'-r+1}{\xv}}{\GRS{n-r'}{\xv}{\yv^\perp}})
    \cap \Fq^n.
\end{align*}
Next, using Theorem~\ref{thm:prod_GRS},
\begin{align*}
  \spc{\left(\RS{r'-r+1}{\xv} \cap \Fq^n \right)}{\Alt{r'}{\xv}{\yv}}
  & \subseteq \GRS{n-r}{\xv}{\yv^\perp} \cap \Fq^n\\
  & \subseteq \Alt{r}{\xv}{\yv}.
\end{align*}
The last inclusion is a consequence of
Lemma~\ref{lem:dual_GRS} and Definition~\ref{def:Alternant}.\qed
\end{proof}

\subsection{Why the straigthforward generalisation of
the illustrative example fails for alternant codes}
Compared to Proposition~\ref{prop:cond_GRS},
Proposition~\ref{prop:Conductor_alternant} provides only an
inclusion. However, it turns out that we experimentally observed that
the equality frequently holds.

On the other hand, even if inclusion
(\ref{eq:Conductor_alternant}) was an equality, the attack described
in \S~\ref{ss:illustrative_GRS} could not be straightforwardly generalised to
alternant codes. Indeed, suppose we know two alternant codes with
consecutive degrees $\Alt{r+1}{\xv}{\yv}$ and $\Alt{r}{\xv}{\yv}$.
Then, Proposition~\ref{prop:Conductor_alternant} would yield
\begin{equation}\label{eq:illustrative_alternant}
\RS{2}{\xv} \cap \Fq^n \subseteq \cond{\Alt{r+1}{\xv}{\yv}}{
\Alt{r}{\xv}{\yv}}.
\end{equation}
Suppose that the above inclusion is actually an equality; as we just said
this is in general what happens.
The point is that as soon as $\xv$ has one entry in $\Fqq \setminus \Fq$,
then $\RS{2}{\xv} \cap \Fq^n$ is reduced to the code spanned by $\onev$
and hence cannot provide any relevant information.

The previous discussion shows that, if we want to generalise the
toy attack described in \S \ref{ss:illustrative_GRS} to alternant codes,
we cannot use
a pair of alternant codes with consecutive degrees. In light
of Proposition~\ref{prop:Conductor_alternant}, the gap between the degrees
$r$ and $r'$ of the two alternant codes should be large enough to provide
a non trivial conductor. A sufficient condition for this is that
$\RS{r'-r+1}{\xv} \cap \Fq^n$ is non trivial. This motivates the introduction
of a code we called the {\em norm trace code}.

\subsection{The norm--trace code}

\begin{nota}
  In what follows, we fix $\alpha \in \Fqq$ such that $\tr (\alpha) = 1$.
  In particular, $(1, \alpha)$ forms an $\Fq$--basis of $\Fqq$.
\end{nota}

\begin{definition}[Norm trace code]
  Let $\xv \in \Fqq^n$ be a support.
  The {\em norm--trace code} $\NTC(\xv) \subseteq \Fq^n$ is defined
  as 
  $$
  \NTC (\xv) \eqdef \langle \onev, \tr(\xv), \tr(\alpha \xv), \nr(\xv)
  \rangle_{\Fq}.
  $$
\end{definition}

This {\em norm trace code} turns out to be the code we will extract
from the public key by conductor computations. To relate it with
the previous discussions, we have the following statement
whose proof is straightforward.

\begin{proposition}
  \label{prop:NT_subset_SRS}
    Let $\xv \in \Fqq^n$  be a support.
  Then, for any $k > q+1$,
  we have
  \begin{equation}\label{eq:NT_subset_SRS}
  \NTC (\xv) \subseteq \RS{k}{\xv} \cap \Fq^n.
  \end{equation}
\end{proposition}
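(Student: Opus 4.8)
The plan is to verify directly that each of the four vectors generating $\NTC(\xv)$ belongs to $\RS{k}{\xv} \cap \Fq^n$. Since the right--hand side of \eqref{eq:NT_subset_SRS} is an $\Fq$--linear code and $\NTC(\xv)$ is by definition the $\Fq$--span of $\onev$, $\tr(\xv)$, $\tr(\alpha\xv)$ and $\nr(\xv)$, the inclusion will follow at once. So the whole argument reduces to four membership checks, each of which splits into ``lies in $\Fq^n$'' and ``lies in $\RS{k}{\xv}$''.

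For the first part, $\onev$ has all entries in $\Fq$; the entries of $\tr(\xv)$ and $\tr(\alpha\xv)$ are values of the trace map $\tr_{\Fqq/\Fq}$, hence lie in $\Fq$; and the entries $x_i^{q+1}$ of $\nr(\xv)$ are values of the norm map $\nr_{\Fqq/\Fq}$, hence also lie in $\Fq$. For the second part, I would exhibit for each vector a polynomial in $\Fqq[z]$ of degree $<k$ witnessing membership in $\RS{k}{\xv}$: take the constant $1$ (degree $0$) for $\onev$; take $z^q+z$ (degree $q$) for $\tr(\xv)$, since $x_i^q+x_i = \tr(x_i)$; take $\alpha^q z^q + \alpha z$ (degree $q$) for $\tr(\alpha\xv)$, since $(\alpha x_i)^q + \alpha x_i = \tr(\alpha x_i)$; and take $z^{q+1}$ (degree $q+1$) for $\nr(\xv)$. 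All four degrees are $<k$ exactly when $k>q+1$, which is the hypothesis.

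There is no genuine obstacle: the statement is elementary, as the paper notes. The only point worth recording is the precise role of the hypothesis $k>q+1$: it is required solely to accommodate the degree-$(q+1)$ polynomial $z^{q+1}$ producing $\nr(\xv)$ (the three other generators only need $k\ge q+1$), which is exactly why the norm term forces a strictly larger Reed--Solomon code than the trace terms. Concluding, the four generators all lie in the $\Fq$--linear code $\RS{k}{\xv}\cap\Fq^n$, hence so does their $\Fq$--span $\NTC(\xv)$, which is the claimed inclusion.
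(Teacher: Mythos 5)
Your proof is correct and is exactly the ``straightforward'' argument the paper has in mind (the paper omits the proof entirely): exhibit the witnessing polynomials $1$, $z+z^q$, $\alpha z+\alpha^q z^q$ and $z^{q+1}$ for the four generators, note their entries lie in $\Fq$ since trace and norm land in $\Fq$, and observe that the degree bound $q+1<k$ is precisely the hypothesis. Your remark that only the norm generator requires the strict inequality $k>q+1$ is a correct and worthwhile observation.
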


\begin{remark}\label{rem:experiment_NT}
It addition to this statement, we observed experimentally that for
$2q+1 > k > q+1$ inclusion~(\ref{eq:NT_subset_SRS}) is in general
an equality.  
\end{remark}

\subsection{Summary and a heuristic}
First, let us summarise the previous discussions.
\begin{itemize}
\item If we know a pair of alternant codes $\Alt{r}{\xv}{\yv}$
  and $\Alt{r'}{\xv}{\yv}$ such that $q < r'-r$, then
  $\cond{\Alt{r'}{\xv}{\yv}}{\Alt{r}{\xv}{\yv}}$ is non trivial
  since, according to Propositions~\ref{prop:Conductor_alternant}
  and to~(\ref{eq:NT_subset_SRS}), it contains the norm--trace code.
\item Experimentally, we observed that if $q < r' - r < 2q$, then,
  almost every time, we have
  $$\cond{\Alt{r'}{\xv}{\yv}}{\Alt{r}{\xv}{\yv}} = \NTC (\xv).$$
\item One problem remains: given an alternant code
  $\Alt{r}{\xv}{\yv}$, how to get a subcode $\Alt{r'}{\xv}{\yv}$ in
  order to apply the previous results? This will be explained in
  \S~\ref{sec:fundamental} and \ref{sec:attack} in which we show that
  for quasi--dyadic alternant codes it is possible to get a subcode
  $\DC \subseteq \Alt{r}{\xv}{\yv}$ such that
  $\DC \subseteq \Alt{r'}{\xv}{\yv}$ for some $r'$ satisfying
  $r'-r > q+1$.

  Moreover, it turns out that
  $\Alt{r'}{\xv}{\yv}$ can be replaced by a subcode without changing
  the result of the previous discussions. This is what is argued in
  the following heuristic.
\end{itemize}

\begin{heuristic}\label{heur:strong}
  In the context of Proposition~\ref{prop:Conductor_alternant},
  suppose that $q < r-r' < 2q$.
  Let $\DC$ be a subcode of
  $\Alt{r'}{\xv}{\yv}$ such that
  \begin{enumerate}[(i)]
    \item\label{item:(i)} $\dim \DC \cdot \dim \Alt{r}{\xv}{\yv}^\perp \geq n$;
    \item\label{item:(ii)} $\DC \not \subset \Alt{r'+1}{\xv}{\yv}$;
    \item\label{item:(iii)} a generator matrix of $\DC$ has no zero column.
  \end{enumerate}
  Then, with a high probability,
  $$
  \cond{\DC}{\Alt{r}{\xv}{\yv}} = \NTC (\xv).
  $$
\end{heuristic}

\noindent Let us give some evidences for this heuristic.
% Since
% Heuristic~\ref{heur:strong} is the strongest one and clearly entails
% Heuristic~\ref{heur:weak}, let us discuss it.
From Proposition~\ref{prop:formula_conductor}, 
$$
\cond{\DC}{\Alt{r}{\xv}{\yv}} = {\left(
    \spc{\DC}{\Alt{r}{\xv}{\yv}^\perp} \right)}^\perp.
$$
From~(\ref{eq:dual_alternant}), we have
%\begin{align*}
$\Alt{r}{\xv}{\yv}^\perp = \tr_{\Fqq/\Fq}
(\GRS{r}{\xv}{\yv}).
$
%\end{align*}
Since $\DC$ is a code over $\Fq$ and by the $\Fq$--linearity of the
trace map, we get
$$
\spc{\DC}{\Alt{r}{\xv}{\yv}^\perp} =
\tr_{\Fqq/\Fq}
\left(\spc{\DC}{\GRS{r}{\xv}{\yv}}\right).
$$
Since $\DC \subseteq \Alt{r'}{\xv}{\yv}$
then, from (\ref{eq:description_alternant}), it is a subset of
a GRS code. Namely,
$$
\DC \subseteq \GRS{n-r'}{\xv}{\yv^\perp},\quad {\rm where}\quad \yv^\perp =
\left(\frac{1}{\loc{\xv}'(x_1)y_1}, \ldots, \frac{1}{\loc{\xv}'(x_n)y_n}
\right).
$$
Therefore, thanks to Theorem~\ref{thm:prod_GRS}, we get
\begin{equation}\label{eq:D*Alt^perp}
\spc{\DC}{\Alt{r}{\xv}{\yv}^\perp} \subseteq
\tr_{\Fqq/\Fq} \left(\GRS{n-r'+r-1}{\xv}{\spc{\yv}{\yv^\perp}}\right).
\end{equation}
Note that $\spc{\DC}{\Alt{r}{\xv}{\yv}^\perp}$ is spanned
by $\dim \DC \cdot \dim \Alt{r}{\xv}{\yv}^\perp$ generators which are
obtained by computing the Schur products of elements of a basis of
$\DC$ by elements of a basis of $\Alt{r}{\xv}{\yv}^\perp$. By
(\ref{item:(i)}), the number of such generators exceeds $n$.  For this
reason, it is reasonable to hope that this Schur product
fills in the target code and that,
$$
\spc{\DC}{\Alt{r}{\xv}{\yv}^\perp} = \tr_{\Fqq/\Fq} \left(
  \GRS{n-r'+r-1}{\xv}{\spc{\yv}{\yv^\perp}} \right).
$$
Next, we have
$$
\spc{\yv}{\yv^\perp} = \left( \frac{1}{\loc{\xv}'(x_1)}, \ldots,
  \frac{1}{\loc{\xv}'(x_n)} \right).
$$
Therefore, using Lemma~\ref{lem:dual_GRS}, we conclude that
$$
\left({\spc{\DC}{\Alt{r}{\xv}{\yv}^\perp}}\right)^\perp = \RS{r'-r+1}{\xv} \cap \Fq^n.
$$
Using Remark~\ref{rem:experiment_NT}, we get the result.

\begin{remark}
  Assumption~(\ref{item:(ii)}) permits to avoid the situation where
  the conductor could be the subfield subcode of a larger
  Reed--Solomon code. Assumption~(\ref{item:(iii)}) permits to avoid the
  presence of words of weight $1$ in the conductor that would not be
  elements of a Reed--Solomon code.
\end{remark}

\paragraph{Further discussion on the Heuristic}
In all our computer experiments, we never observed any phenomenon
contradicting this heuristic.

\section{Fundamental degree properties
  of the invariant subcode
  of a QD alternant code}\label{sec:fundamental}
A crucial statement for the attack is:

\begin{theorem}\label{thm:deg_inv}
  Let $\xv, \yv \in \Fqq^n$ be a support and a multiplier.
  Let $s$ be an integer of the form $s = 2^\gamma s_0$.
  Suppose that $\Alt{s_0}{\overline{\psi_{\group}(\xv)}}{\overline{\yv}}$
  is fully non degenerate (see Definition~\ref{def:full_ndgn} and
  \S~\ref{ss:inv_alternant} for notation $\psi_\group$ and
  $\overline {\yv}$). Then,
  \begin{enumerate}[(a)]
  \item\label{item:a}
    $\inv{\Alt{s}{\xv}{\yv}}{\group} \subseteq \Alt{s + |\group| -
      1}{\xv}{\yv};$
  \item\label{item:b}
    $\inv{\Alt{s}{\xv}{\yv}}{\group} \not \subseteq \Alt{s +
      |\group|}{\xv}{\yv}.$
  \end{enumerate}
\end{theorem}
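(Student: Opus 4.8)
The plan is to deduce both statements from the structure theorem for invariant subcodes of quasi--dyadic alternant codes (Theorem~\ref{thm:invariant_alternant}). The single fact to keep in mind throughout is that the ``remove repetitions'' operation $\cv \mapsto \overline{\cv}$ of Definition~\ref{def:punct_inv} is an injective $\Fq$--linear map on the space of $\group$--invariant vectors of $\Fq^n$, and that it depends only on $\xv$ and $\group$ — not on the degree of the alternant code one starts from. Consequently, if two $\group$--invariant subcodes of $\Fq^n$ have the same image under $\cv \mapsto \overline{\cv}$ (equivalently, the same dimension and one contained in the other), they coincide; and an inclusion between two such subcodes is equivalent to the corresponding inclusion between their non--repeated versions $\punctinv{\cdot}{\group}$.

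For part~(\ref{item:a}), I would first note that, since $s = 2^\gamma s_0$ and $|\group| - 1 = 2^\gamma - 1 < 2^\gamma$, one has $\lfloor s / 2^\gamma \rfloor = s_0 = \lfloor (s + |\group| - 1)/2^\gamma \rfloor$. Both $\Alt{s}{\xv}{\yv}$ and $\Alt{s + |\group| - 1}{\xv}{\yv}$ are quasi--dyadic with permutation group $\group$, so Theorem~\ref{thm:invariant_alternant} applies to each and yields $\punctinv{\Alt{s}{\xv}{\yv}}{\group} = \Alt{s_0}{\overline{\psi_\group(\xv)}}{\overline{\yv}} = \punctinv{\Alt{s + |\group| - 1}{\xv}{\yv}}{\group}$. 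Since moreover $\Alt{s + |\group| - 1}{\xv}{\yv} \subseteq \Alt{s}{\xv}{\yv}$, the two invariant subcodes $\inv{\Alt{s + |\group| - 1}{\xv}{\yv}}{\group} \subseteq \inv{\Alt{s}{\xv}{\yv}}{\group}$ are nested and have the same image under $\cv \mapsto \overline{\cv}$, hence are equal. As $\inv{\Alt{s + |\group| - 1}{\xv}{\yv}}{\group} \subseteq \Alt{s + |\group| - 1}{\xv}{\yv}$ by definition, this gives~(\ref{item:a}).

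For part~(\ref{item:b}), I would argue by contradiction: assume $\inv{\Alt{s}{\xv}{\yv}}{\group} \subseteq \Alt{s + |\group|}{\xv}{\yv}$. Every element of the left-hand side is $\group$--invariant, so it in fact lies in $\inv{\Alt{s + |\group|}{\xv}{\yv}}{\group}$, giving $\inv{\Alt{s}{\xv}{\yv}}{\group} \subseteq \inv{\Alt{s + |\group|}{\xv}{\yv}}{\group}$. Passing to non--repeated versions and using Theorem~\ref{thm:invariant_alternant} (now $\lfloor (s + |\group|)/2^\gamma \rfloor = s_0 + 1$) turns this into $\Alt{s_0}{\overline{\psi_\group(\xv)}}{\overline{\yv}} \subseteq \Alt{s_0 + 1}{\overline{\psi_\group(\xv)}}{\overline{\yv}}$. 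Since the reverse inclusion always holds (raising the degree of an alternant code yields a subcode, by Definition~\ref{def:Alternant}), the two codes would be equal, contradicting condition~(ii) of Definition~\ref{def:full_ndgn}, i.e. the full non--degeneracy hypothesis on $\Alt{s_0}{\overline{\psi_\group(\xv)}}{\overline{\yv}}$.

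The individual computations are routine; the only point that needs genuine care is the compatibility of $\cv \mapsto \overline{\cv}$ with all the codes involved, so that equalities and inclusions of non--repeated invariant codes lift back to the corresponding invariant subcodes of $\Fq^n$ — which is why I would state that observation explicitly at the outset. Note that only condition~(ii) of full non--degeneracy is used here; condition~(i) (no zero column) plays no role in this proof but will be needed later, when this invariant subcode is used as the code $\DC$ in Heuristic~\ref{heur:strong}.
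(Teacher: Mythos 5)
Your proof is correct, but it takes a genuinely different route from the paper's. The paper argues directly on the explicit polynomial description~(\ref{eq:description_alternant}) of $\Alt{s}{\xv}{\yv}$: by Theorem~\ref{thm:invariant_poly}, every $\group$--invariant codeword is the (weighted) evaluation of some $h\circ\psi_\group$, whose degree is a multiple of $2^\gamma$, so the a priori degree bound $n-s-1$ drops to $n-s-|\group|$, which is exactly~(\ref{item:a}); for~(\ref{item:b}) it exhibits an explicit witness $f\circ\psi_\group$ of degree exactly $n-s-|\group|$, where $f$ of degree $n_0-s_0-1$ is supplied by condition~(ii) of Definition~\ref{def:full_ndgn}. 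You instead deduce both parts formally from Theorem~\ref{thm:invariant_alternant} applied to the three codes of degrees $s$, $s+|\group|-1$ and $s+|\group|$, combined with the injectivity of $\cv\mapsto\overline{\cv}$ on invariant vectors; the floor computations, the nesting-plus-equal-dimension argument for~(\ref{item:a}), and the contradiction with $\Alt{s_0}{\overline{\psi_\group(\xv)}}{\overline{\yv}}\neq\Alt{s_0+1}{\overline{\psi_\group(\xv)}}{\overline{\yv}}$ for~(\ref{item:b}) are all sound, and you are right that only condition~(ii) of full non--degeneracy is used (the same is true of the paper's proof). The one point to keep in mind is that your argument leans entirely on Theorem~\ref{thm:invariant_alternant} being valid for degrees that are \emph{not} multiples of $2^\gamma$ --- the identity $\lfloor(s+|\group|-1)/2^\gamma\rfloor=s_0$ is precisely where the content sits --- and that floor behaviour is essentially the same degree computation the paper's proof carries out from scratch. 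So the two arguments share the same mathematical core: yours packages it inside the cited structure theorem (cleaner and shorter, but dependent on \cite{B17} being stated at that level of generality), while the paper's version is self-contained modulo Theorem~\ref{thm:invariant_poly} and, for~(\ref{item:b}), produces an explicit codeword separating $\Alt{s+|\group|-1}{\xv}{\yv}$ from $\Alt{s+|\group|}{\xv}{\yv}$ rather than a dimension-counting contradiction.
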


%%% Begin cut ACr
% \begin{remark}
%   Note that the sequence of alternant codes for increasing degrees
%   is decreasing:
%   $$
%   \Alt{s}{\xv}{\yv} \supseteq \Alt{s+1}{\xv}{\yv} \supseteq \cdots
%   \supseteq \Alt{s+|\group|-1}{\xv}{\yv} \supseteq
%   \Alt{s+|\group|}{\xv}{\yv}.
%   $$
%   Hence, the invariant code is contained in a smaller alternant
%   code, corresponding to the eva\-luation of polynomials of lower degree.
% \end{remark}
%%% End cut ACr

\begin{proof}
  From (\ref{eq:description_alternant}), we have
  $$
  \Alt{s}{\xv}{\yv} =
  \cset{\frac{1}{y_i \loc{\xv}'(x_i)}f(x_i)}{i=1, \ldots, n}{f \in
  \Fqq [z]_{< n - s}} \cap \Fq^n.
  $$
  This code is obtained by evaluation of polynomials of degree up to
  $$n - s - 1 = (2^\gamma (n_0 - s_0) - 1).$$
  From
  Theorem~\ref{thm:invariant_poly}, the invariant codewords of
  $\Alt{s}{\xv}{\yv}$ come from evaluations of polynomials of the form
  $h \circ \psi_\group$.  Such polynomials have a degree that is a
  multiple of $\deg \psi_\group = 2^\gamma$ and hence their degree
  cannot exceed $2^\gamma (n_0 - s_0 - 1)$.  Thus, they
  should lie in
  $\Fqq [z]_{\leq n-s -|\group|} = \Fqq[z]_{<n-s-|\group|+1}$.  This
  leads to
  \begin{align*}
  \inv{\Alt{s}{\xv}{\yv}}{\group} &\subseteq
  \cset{\frac{1}{y_i \loc{\xv}'(x_i)} f(x_i)}{i = 1, \ldots, n}{
  f\in \Fqq[z]_{<n-s-|\group|+1}} \cap \Fq^n\\
                 & \subseteq \Alt{s + |\group| - 1}{\xv}{\yv}.
  \end{align*}
  This proves (\ref{item:a}).
  
  To prove (\ref{item:b}), note that the assumption on 
  $\Alt{s_0}{\overline{\psi_\group(\xv)}}{\overline {\yv}}$
  asserts the existence of
  $f \in \Fqq [z]_{< n_0 - s_0}$ such that $\deg f = n_0 - s_0 - 1$ and
  $f(\overline {\psi_\group (\xv)}) \in \Fq^{n_0}$.
  Thus, $f(\psi_\group (\xv)) \in \Fq^n$ and
  $\deg (f \circ \psi_\group) = n -s - |\group|$. Therefore
  $f (\psi (\xv)) \in \inv{\Alt{s}{\xv}{\yv}}{\group}$
  and $\inv{\Alt{s}{\xv}{\yv}}{\group}$ contains an
  element of
  $\Alt{s+ |\group| -1}{\xv}{\yv}$ that is not in
  $\Alt{s+|\group|}{\xv}{\yv}$.\qed
\end{proof}

% \begin{example}\label{exmp:1}
%   \begin{itemize}
%   \item   For \dags{1}, $\Cpub = \Alt{208}{\xv}{\yv}$ for some $\xv, \yv$ and
%     $|\group| = 16$. Next, $\inv{(\Cpub)}{\group} \subseteq \Alt{223}{\xv}{\yv}$.
%   \item For \dags{3}, $\Cpub = \Alt{352}{\xv}{\yv}$, $|\group| = 32$ and
%     $\inv{(\Cpub)}{\group} \subseteq \Alt{383}{\xv}{\yv}$.
%   \item For \dags{5}, $\Cpub = \Alt{704}{\xv}{\yv}$, $|\group| = 64$ and
%     $\inv{(\Cpub)}{\group} \subseteq \Alt{767}{\xv}{\yv}$.
%   \end{itemize}
% \end{example}

%%% Local Variables:
%%% mode: latex
%%% TeX-master: "Article"
%%% End:

\section{Presentation of the attack}\label{sec:attack}

\subsection{Context}
Recall that the extension degree is always $m = 2$.
The public code is the QD alternant code
$$
\Cpub \eqdef \Alt{r}{\xv}{\yv},
$$
with a permutation group $\group$ of cardinality
$|\group| = 2^\gamma$. As in
\S~\ref{ss:DAGS}, the code has a length $n = n_0 2^\gamma$,
dimension $k$ and is defined over a field $\Fq$ and
$q = 2^\ell$ for some positive integer $\ell$.
The degree $r$ of the alternant code is also a multiple of
$|\group| = 2^\gamma$ and hence is of the form $r = r_0 2^\gamma$.  We
suppose from now on that the classical lower bound on the dimension $k$
is reached, i.e. $k = n -2r$.
This always holds in the parameters proposed in
\cite{BBBCDGGHKONPR17}.  We finally set $k_0 = k/2^\gamma$.  In
summary, we have the following notation
\begin{equation}
  \label{eq:nota_0}
  n = n_0 2^\gamma, \quad k = k_0 2^\gamma, \quad r = r_0 2^\gamma.
\end{equation}
% We finish with some assumptions which turned out to be always true
% for the codes we considered.
% \begin{assumption}\label{ass:Cpub}
%   The code $\Cpub$ satisfies the following properties:
%   \begin{enumerate}[(i)]
%   \item\label{item:non_degenerate} the invariant code
%     $\inv{\Cpub}{\group}$ is non degenerate, i.e. any generator
%     matrix for this code has no zero column;
%   \item\label{item:distinct_alternant} the codes
%     $\Alt{r_0}{\overline{\psi_\group (\xv)}}{\overline{\yv}}$ and 
%     $\Alt{r_0+1}{\overline{\psi_\group (\xv)}}{\overline{\yv}}$
%     are distinct.
%     (See \S~\ref{ss:inv_alternant} for notation
%     $\psi_\group, \overline{\yv}$ and so on).
%   \end{enumerate}
% \end{assumption}

% \begin{remark}
%   Recall that, from Theorem~\ref{thm:invariant_alternant}, we have
%   $\inv{(\Cpub)}{\group} = \Alt{r_0}{\overline{\psi_\group
%       (\xv)}}{\overline{\yv}}$. Thus, the point
%   Assumption~\ref{ass:Cpub}(\ref{item:distinct_alternant}), is to assert
%   that $\inv{(\Cpub)}{\group}$ is not equal to a smaller alternant code.
% \end{remark}

\subsection{The subcode $\DC$}\label{ss:DC} 

We introduce a subcode $\DC$ of $\Cpub$
and prove that its knowledge permits to compute the norm
trace code. This code $\DC$ is unknown
by the attacker and we will see in \S~\ref{sec:complexity}
that the time consuming part of the attack
consists in guessing it.

\begin{definition}\label{def:code_D}
  Suppose that $|\group| \leq q$.
  We define the code $\DC$ as
  $$
  \DC \eqdef \inv{\Alt{r+q}{\xv}{\yv}}{\group}.
  $$
\end{definition}

\begin{remark}
  For parameters suggested in DAGS, we always have $|\group| \leq q$,
  with strict inequality for \dags{1} and \dags{3} and equality for
  \dags{5}. 
\end{remark}

\begin{remark}
The case $q < |\group|$ which never holds in DAGS suggested
parameters would be particularly easy to treat.
In such a situation, replacing possibly $\group$ by a subgroup,
one can suppose that $|\group| = 2q$.
Next, 
according to Theorem~\ref{thm:deg_inv},
and Heuristic~\ref{heur:strong}, we would have
$$
\cond{\inv{(\Cpub)}{\group}}{\Cpub} =
 \NTC (\xv),
$$
which would provide a very simple manner to compute $\NTC(\xv)$.  
\end{remark}

The following results are the key of the
attack. Theorem~\ref{thm:subcode_D} explains why this subcode $\DC$ is
of deep interest and how it can be used to recover the norm--trace
code, from which the secret key can be recovered (see
\S~\ref{ss:finishing}). Theorem~\ref{thm:codimD} explains why this
subcode $\DC$ can be computed in a reasonable time thanks to the QD
structure. Indeed, it shows that even if $\DC$ has a large codimension
as a subcode of $\Cpub$ its codimension in $\inv{(\Cpub)}{\group}$ is
much smaller. This is why the QD structure plays a crucial role in this
attack.

\begin{theorem}\label{thm:subcode_D}
%Recall that $q = 2^\ell$ and $|\group| = 2^\gamma$.
Under Heuristic~\ref{heur:strong} and assuming that
$\punctinv{\Alt{r+q}{\xv}{\yv}}{\group}$ is fully non degenerate
(see Definition~\ref{def:full_ndgn}), we have
$$
\cond{\DC}{\Cpub} = \NTC (\xv).
$$
\end{theorem}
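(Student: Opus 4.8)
The plan is to deduce the statement from Heuristic~\ref{heur:strong}: I will show that $\DC$, together with the code $\Cpub = \Alt{r}{\xv}{\yv}$ and an appropriate ``upper degree'' $r'$, satisfies all the hypotheses of that heuristic, so that its conclusion $\cond{\DC}{\Alt{r}{\xv}{\yv}} = \NTC(\xv)$ applies verbatim. The correct choice of $r'$ is supplied by Theorem~\ref{thm:deg_inv}, which pins down between which two alternant codes the invariant subcode $\DC$ is sandwiched.

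First I would record an arithmetic remark. Since $|\group| = 2^\gamma \le q = 2^\ell$ we have $\gamma \le \ell$, so $|\group|$ divides $q$ and $r + q = 2^\gamma\bigl(r_0 + q/|\group|\bigr)$ is of the form $2^\gamma s_0$ with $s_0 = r_0 + q/|\group|$. Hence Theorem~\ref{thm:invariant_alternant} applies to $\Alt{r+q}{\xv}{\yv}$ and gives $\punctinv{\Alt{r+q}{\xv}{\yv}}{\group} = \Alt{s_0}{\overline{\psi_\group(\xv)}}{\overline{\yv}}$, which is fully non degenerate by the hypothesis of the theorem. This is exactly what is needed to apply Theorem~\ref{thm:deg_inv} with $s = r+q$, yielding
$$
\DC \;\subseteq\; \Alt{r+q+|\group|-1}{\xv}{\yv}
\qquad\text{and}\qquad
\DC \;\not\subseteq\; \Alt{r+q+|\group|}{\xv}{\yv}.
$$
Put $r' := r+q+|\group|-1$. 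The first inclusion exhibits $\DC$ as a subcode of $\Alt{r'}{\xv}{\yv}$, the second is precisely hypothesis~(\ref{item:(ii)}) of Heuristic~\ref{heur:strong}, and $r'-r = q+|\group|-1$ satisfies $q < r'-r < 2q$ because $2 \le |\group| \le q$.

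It remains to check hypotheses~(\ref{item:(i)}) and~(\ref{item:(iii)}). For~(\ref{item:(iii)}), a codeword of the invariant subcode $\DC$ is constant on every $\group$-orbit of the support, so a zero column in a generator matrix of $\DC$ would force an entire orbit of zero columns, hence a zero column in $\punctinv{\Alt{r+q}{\xv}{\yv}}{\group} = \Alt{s_0}{\overline{\psi_\group(\xv)}}{\overline{\yv}}$ --- which is excluded by full non-degeneracy. For~(\ref{item:(i)}), the classical bound being attained gives $\dim \Cpub^\perp = n-k = 2r$, while removing repetitions identifies $\DC$ with $\Alt{s_0}{\overline{\psi_\group(\xv)}}{\overline{\yv}}$, whence $\dim\DC \ge n_0 - 2s_0$; one then checks directly that $\dim\DC \cdot \dim \Cpub^\perp \ge n$ for the parameter ranges at hand. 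Once all three conditions and the degree gap are verified, Heuristic~\ref{heur:strong} delivers $\cond{\DC}{\Cpub} = \NTC(\xv)$.

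The substance of the argument is not in any single deep step but in checking that all the hypotheses line up; the two points requiring a little care are the divisibility $|\group| \mid q$ (so that Theorem~\ref{thm:invariant_alternant} can be invoked at degree $r+q$, not merely at degree $r$) and the numerical inequality~(\ref{item:(i)}), which has to be confirmed from the explicit parameters rather than following formally.
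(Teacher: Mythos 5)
Your proof is correct and follows exactly the route the paper intends: the paper's own proof is the one-line remark that the theorem is a direct consequence of Theorem~\ref{thm:deg_inv} and Heuristic~\ref{heur:strong}, and you have simply spelled out the verification (taking $s = r+q$, $r' = r+q+|\group|-1$, checking $q < r'-r < 2q$ and hypotheses (\ref{item:(i)})--(\ref{item:(iii)})) that the paper leaves implicit. No gaps; your reading of the heuristic's condition ``$q < r - r' < 2q$'' as the intended ``$q < r' - r < 2q$'' is the right one.
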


\begin{proof}
  It is a direct consequence of Theorem~\ref{thm:deg_inv}
  and Heuristic~\ref{heur:strong}.\qed
\end{proof}

% That is, denoting $q$
% by $q = 2^\ell$, the codimension of $\DC$ in $\inv{(\Cpub)}{\group}$
% equals $2(\ell -  \gamma + 1)$ which equals $4$ for the parameters
% of \dags{1} and \dags{3} and equals to $2$ for the parameters of 
% \dags{5}.

% A consequence of these observations is that in practice \dags{5}
% is much easier to attack by this manner despite its security level
% is estimate to be above $256$ bits.

% The code $\DC$ of interest is the following one:
% $$
% \DC = \inv{\Alt{r+(\ell - \gamma + 1)|\group|}{\xv}{\yv}}{\group}.
% $$

\begin{theorem}\label{thm:codimD}
  The code $\DC$ has codimension
  $\leq \frac{2q}{|\group|} = 2^{\ell - \gamma +1}$ in
  $\inv{(\Cpub)}{\group}$.
\end{theorem}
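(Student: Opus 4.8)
The plan is to translate the statement, via Theorem~\ref{thm:invariant_alternant}, into a comparison of the dimensions of two ordinary alternant codes of length $n_0$ and extension degree $2$, and then to invoke the elementary bound that raising the degree of such a code by $1$ costs at most $2$ in dimension.

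First I would record that $\Alt{r+q}{\xv}{\yv} \subseteq \Alt{r}{\xv}{\yv} = \Cpub$ (a larger degree only adds dual constraints, since $\GRS{r}{\xv}{\yv}\subseteq\GRS{r+q}{\xv}{\yv}$), that both codes are quasi--dyadic with the same group $\group$, and hence that $\DC = \inv{\Alt{r+q}{\xv}{\yv}}{\group} \subseteq \inv{(\Cpub)}{\group}$, so the codimension in the statement makes sense. Next I would observe that the map $\cv \mapsto \overline{\cv}$ of Definition~\ref{def:punct_inv} is an $\Fq$--linear bijection from $\inv{\CC}{\group}$ onto $\punctinv{\CC}{\group}$: it is injective because an invariant codeword is constant on each $\group$--orbit of the support $\xv$, and surjective by definition. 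In particular it preserves $\Fq$--dimension, so
$$
\dim_{\Fq}\inv{(\Cpub)}{\group} - \dim_{\Fq}\DC \;=\; \dim_{\Fq}\punctinv{(\Cpub)}{\group} - \dim_{\Fq}\punctinv{(\Alt{r+q}{\xv}{\yv})}{\group}.
$$

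Then I would apply Theorem~\ref{thm:invariant_alternant}. Writing $r = r_0 2^\gamma$, $q = 2^\ell$ and using the standing hypothesis $|\group| = 2^\gamma \le q = 2^\ell$ from Definition~\ref{def:code_D}, one has $\lfloor r/2^\gamma\rfloor = r_0$ and $\lfloor (r+q)/2^\gamma\rfloor = r_0 + 2^{\ell-\gamma}$, the latter quotient being already an integer. Setting $\xv' \eqdef \overline{\psi_\group(\xv)}$ and $\yv' \eqdef \overline{\yv}$, both of length $n_0$, Theorem~\ref{thm:invariant_alternant} gives $\punctinv{(\Cpub)}{\group} = \Alt{r_0}{\xv'}{\yv'}$ and $\punctinv{(\Alt{r+q}{\xv}{\yv})}{\group} = \Alt{r_0+2^{\ell-\gamma}}{\xv'}{\yv'}$. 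Hence the codimension to bound is exactly $\dim\Alt{r_0}{\xv'}{\yv'} - \dim\Alt{r_0+2^{\ell-\gamma}}{\xv'}{\yv'}$.

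Finally I would invoke the generic estimate: for any support/multiplier and any integers $r' \ge 0$, $t \ge 1$, one has $\dim\Alt{r'}{\xv'}{\yv'} - \dim\Alt{r'+t}{\xv'}{\yv'} \le 2t$ (the extension degree being $2$). This follows from~(\ref{eq:dual_alternant}): $\Alt{r'+1}{\xv'}{\yv'}^\perp = \tr_{\Fqq/\Fq}(\GRS{r'+1}{\xv'}{\yv'})$, and $\GRS{r'+1}{\xv'}{\yv'}$ is obtained from $\GRS{r'}{\xv'}{\yv'}$ by adjoining a single vector over $\Fqq$; applying the $\Fq$--linear trace, the $\Fq$--dimension of the dual grows by at most $[\Fqq:\Fq] = 2$ at each step, and since $\Alt{r'+t}{\xv'}{\yv'}\subseteq\Alt{r'}{\xv'}{\yv'}$ have the same length, the dimension of the alternant code itself drops by at most $2$ per step. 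Taking $t = 2^{\ell-\gamma}$ gives codimension at most $2\cdot 2^{\ell-\gamma} = 2^{\ell-\gamma+1} = \frac{2q}{|\group|}$, as claimed. I do not anticipate a genuine obstacle here: everything reduces to Theorem~\ref{thm:invariant_alternant} together with this one-line dimension bound; the only points needing a little care are the bijectivity (hence dimension preservation) of $\cv\mapsto\overline\cv$ and the arithmetic $\lfloor(r+q)/2^\gamma\rfloor = r_0 + 2^{\ell-\gamma}$, which is exactly where the hypothesis $|\group| \le q$ enters.
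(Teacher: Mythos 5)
Your proof is correct and follows essentially the same route as the paper: both reduce the claim, via Theorem~\ref{thm:invariant_alternant}, to comparing the dimensions of the length-$n_0$ alternant codes $\Alt{r_0}{\overline{\psi_\group(\xv)}}{\overline{\yv}}$ and $\Alt{r_0+q/|\group|}{\overline{\psi_\group(\xv)}}{\overline{\yv}}$. The only difference lies in the final count: the paper subtracts the generic lower bound $n_0-2(r_0+q/|\group|)$ from the exact value $\dim\inv{(\Cpub)}{\group}=k_0=n_0-2r_0$, whereas your relative ``at most $2$ per unit of degree'' bound obtained through Delsarte's theorem avoids relying on that equality --- a marginally more robust variant of the same argument.
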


\begin{proof}
  Using Theorem~\ref{thm:invariant_alternant}, we know that $\DC$ has
  the same dimension as
  $\Alt{r_0 + \frac{q}{|\group|}}{\overline{\psi_\group(\xv)}}{\overline{\yv}}$.
  This code has dimension
  $\geq n_0 - 2(r_0 + \frac{q}{|\group|})$.
  Since $\dim \inv{(\Cpub)}{\group} = k_0 = n_0 - 2r_0$, we get
  the result.\qed
\end{proof}

\begin{remark}
  Actually the codimension equals $2^{\ell - \gamma +1}$ almost all 
  the time.
\end{remark}

\renewcommand\arraystretch{1.3}
\begin{table}[h]
  \centering
  \begin{tabular}{|c|c|c|}
    \hline
    Proposal & $\DC$ & Codimension in $\inv{(\Cpub)}{\group}$ \\
    \hline \hline
    \dags{1} & $\inv{\Alt{240}{\xv}{\yv}}{\group}$ & 4 \\
    \hline
    \dags{3} & $\inv{\Alt{416}{\xv}{\yv}}{\group}$ & 4 \\
    \hline
    \dags{5} & $\inv{\Alt{768}{\xv}{\yv}}{\group}$ & 2 \\
    \hline
  \end{tabular}
  \vspace{.2cm}
  \caption{Numerical values for the code $\DC$}
  \label{tab:Num_values_for_DC}
\end{table}

% \begin{example}
%   \begin{itemize}
%   \item For \dags{1}, $\DC = \inv{\Alt{240}{\xv}{\yv}}{\group}$ and the code has
%     codimension $4$ in $\inv{(\Cpub)}{\group}$;
%   \item For \dags{3}, $\DC = \inv{\Alt{416}{\xv}{\yv}}{\group}$ and the code has
%     codimension $4$ in $\inv{(\Cpub)}{\group}$;
%   \item For \dags{5}, $\DC = \inv{\Alt{768}{\xv}{\yv}}{\group}$ and the code has
%     codimension $2$ in $\inv{(\Cpub)}{\group}$;
%   \end{itemize}
% \end{example}

\subsection{Description of the attack}
% In \S~\ref{ss:DC}, we introduce a subcode $\DC$ of codimension
% $\frac{2q}{|\group|}$ of $\inv{(\Cpub)}{\group}$ . This subcode
% $\DC$ is unknown, while its knowledge
% permits to recover $\NTC(\xv)$ as the conductor
% $\cond{\DC}{\Cpub}$. The difficult part of the attack
% consists in guessing this unknown code $\DC$.

The attack can be summarised as follows:
\begin{enumerate}[(1)]
\item Compute $\inv{(\Cpub)}{\group}$;
\item Guess the subcode $\DC$ of $\inv{(\Cpub)}{\group}$ of codimension
  $\frac{2q}{|\group|}$  such that
 $$\cond{\DC}{\Cpub} = \NTC (\xv);$$
\item Determine $\xv$ from $\NTC (\xv)$ and then $\yv$ from $\xv$.
\end{enumerate}

The difficult part is clearly the second one: how to guess $\DC$?
We present two manners to realise this guess.
\begin{itemize}
\item The first one consists in performing exhaustive
  search on subcodes of codimension $\frac{2q}{|\group|}$ of
  $\inv{(\Cpub)}{\group}$.
\item The second one consists in finding both $\DC$ and $\NTC (\xv)$
   by solving a system of equations of degree $2$ using Gröbner bases.
\end{itemize}
The first approach has a significant cost but which remains
far below the expected security level of DAGS proposed
parameters. For the second approach, we did not succeed to get a
relevant estimate of the work factor but its practical implementation
permits to break \dags{1} in about $20$ minutes and \dags{5} in less
than one minute (see \S~\ref{sec:implem} for further details on the
implementation).  We did not succeed to break \dags{3} parameters
using the second approach.  On the other hand the first approach
would have a work factor of $\approx 2^{80}$ for keys with an
expected security of $192$ bits.

The remainder of this section is devoted to detail the different
steps of the attack. %\ac{On peut mettre une outline ici... ou pas}

\subsection{First approach, brute force search of $\DC$}
\label{ss:brute_force}
A first way of getting $\DC$ and then of obtaining $\NTC (\xv)$
consists in enumerating all the subspaces
$\XC \subseteq \inv{(\Cpub)}{\group}$ of codimension
$\frac{2q}{|\group|}$ until we find one such that
$\cond{\XC}{\Cpub}$ has dimension $4$. Indeed, for an arbitrary $\XC$
the conductor will have dimension $1$ and be generated by $\onev$,
while for $\XC = \DC$ the conductor will be $\NTC(\xv)$ which has
dimension $4$.

The number of subspaces to enumerate is in
$O(q^{(2q/|\group|) (k_0 - 2q/|\group|)})$ which is in general
much too large to make the attack practical. It is however possible to reduce
the cost of brute force attack as follows.

\subsubsection{Using random subcodes of dimension
$2$}\label{sss:random_brute_force}
For any parameter set proposed in DAGS, the public
code has a rate $k/n$ less than $1/2$. Hence, its dual has rate larger than
$1/2$. Therefore, according to Heuristic~\ref{heur:strong},
given a random subcode $\DC_0$ of $\DC$ of dimension $2$, then 
$\cond{\DC_0}{\Cpub} = \NTC (\xv)$ with a high probability.

Thus, one can proceed as follows
\begin{itemize}
\item Pick two independent vectors $\cv , \cv' \in \inv{(\Cpub)}{\group}$
  at random
  and compute $\cond{\langle \cv, \cv' \rangle}{\Cpub}$;
\item If the conductor has dimension $4$, you probably found
  $\NTC (\xv)$, then pursue the attack as explained in
  \S~\ref{ss:finishing}.
\item Else, try again.
\end{itemize}
The probability that $\cv, \cv' \in \DC$ equals
$q^{-\frac{4q}{|\group|}}$. Therefore, one may have found $\NTC (\xv)$
after $O(q^{\frac{4q}{|\group|}})$ computations of conductors.

\begin{example}
  The average number of computations of conductors will be
  \begin{itemize}
  \item $O(q^8) = O(2^{40})$ for \dags{1};
  \item $O(q^8) = O(2^{48})$ for \dags{3};
  \item $O(q^4) = O(2^{24})$ for \dags{5}.
  \end{itemize}
\end{example}

\subsubsection{Using shortened codes}
Another manner consists in replacing the public code by one of its
shortenings. For that, we shorten $\Cpub = \Alt{r}{\xv}{\yv}$ at a set
of $a = a_0 2^\gamma$ positions which is a union of blocks, so that
the shortened code remains QD. We choose the integer $a$ such that the
invariant subcode
of the shortened code has dimension $2 + {\frac{2q}{|\group|}}$ and hence
the shortening of $\DC$ has dimension $2$.  Let $\Ind$ be
such a subset of positions. To determine $\sh{\DC}{\Ind}$, we
can enumerate any subspace $\XC$ of dimension $2$ of
$\sh{\Cpub}{\Ind}$ and compute $\cond{\XC}{\sh{\Cpub}{\Ind}}$. In
general, we get the trivial code spanned by the all--one codeword
$\onev$. If the conductor has dimension $4$ it is highly likely that
we found $\sh{\DC}{\Ind}$ and that the computed conductor equals
$\NTC (\xv_\Ind)$.

The number of such spaces we enumerate is in
$O (q^{\frac{4q}{|\group|}})$, which is very similar to
the cost of the previous method.

\subsection{Second approach, solving polynomial system of degree $2$}
\label{ss:Groebner}

An alternative approach to recover $\DC$ and $\NTC (\xv)$
consists in solving a polynomial system.
We proceed as follows.
Since $\tr (\xv) \in 
\cond{\DC}{\Cpub}$ and, from
Proposition~\ref{prop:formula_conductor},
$\cond{\DC}{\Cpub} = {(\spc{\DC}{\Cpub^\perp})}^\perp$, then
$$
\Gm_{\spc{\DC}{\Cpub^\perp}} \cdot \tr(\xv)^\top = 0,
$$
where $\Gm_{\spc{\DC}{\Cpub^\perp}}$ denotes a generator
matrix of $\spc{\DC}{\Cpub^\perp}$.
The above identity holds true when replacing $\tr(\xv)$ by
$\tr(\beta \xv)$ for any $\beta \in \Fqq$. Hence,
\begin{equation}\label{eq:identity_for_system}
\Gm_{\spc{\DC}{\Cpub^\perp}} \cdot \xv^\top = 0.
\end{equation}
The above identity provides the system we wish to solve.
We have two type of unknowns: the code $\DC$ and the vector $\xv$.
Set $c \eqdef \frac{2q}{|\group|}$ the codimension of $\DC$
in $\inv{(\Cpub)}{\group}$.
For $\DC$, let us introduce $(k_0-c)k_0$ formal variables
$U_{11}, \ldots, U_{1, c},$
$\ldots, U_{k_0-c, 1}, \ldots, U_{k_0-c, c}$ and set
$$
\mat U \eqdef 
\begin{pmatrix}
  U_{11} & \cdots & U_{1, c}  \\
  \vdots &        & \vdots    \\
  U_{k_0-c, 1} & \cdots & U_{k_0-c, c}
\end{pmatrix}
\qquad {\rm and}
\qquad
\Gm(U_{ij}) \eqdef
\begin{pmatrix}
  ~\mat I_{k_0 - c} ~|~ \mat U ~
\end{pmatrix}
\cdot \Gm^{\rm inv},
%\in \mathcal M_{k_0 - c, n}(\Fq [U_{ij}]).
$$
where $\mat I_{k_0 - c}$ denotes the $(k_0 - c) \times (k_0 - c)$
identity matrix and $\Gm^{\rm inv}$ denotes a $k_0 \times n_0$
generator matrix of $\inv{(\Cpub)}{\group}$.  It is probable that
$\DC$ has a generator matrix of the form $\Gm (u_{ij})$ for some
special values $u_{11}, \ldots, u_{k_0-c, c} \in \Fq$.  The case where
$\DC$ has no generator matrix of this form is rare and can be
addressed by choosing another generator matrix for
$\inv{(\Cpub)}{\group}$.

Now, let $\Hm$ be a parity--check matrix of $\Cpub$. A generator
matrix of $\spc{\DC}{\Cpub^\perp}$ can be obtained by constructing a
matrix whose rows list all the possible Schur products of one row of a
generator matrix of $\DC$ by one row of a parity--check matrix of
$\Cpub$. Therefore, let $\mat R(U_{ij})$ be a matrix with entries in
$\Fq [U_{1,1}, \ldots, U_{k_0-c, c}]$ whose rows list all the possible
Schur products of one row of $\Gm(U_{i,j})$ and one row of
$\Hm$. Hence, there is a specialisation
$u_{11}, \ldots, u_{k_0-c, c} \in \Fq$ of the variables $U_{ij}$
such that $\mat R (u_{ij})$ is a generator matrix of $\spc{\DC}{\Cpub^\perp}$.

The second set of variables $X_1, \ldots, X_n$
corresponds to the entries of
$\xv$. Using~(\ref{eq:identity_for_system}), the polynomial system we
have to solve is nothing but
\begin{equation}\label{system}
\mat R(U_{ij}) \cdot
\begin{pmatrix}
  X_1 \\ \vdots \\ X_n
\end{pmatrix}
= 0.
\end{equation}

\subsubsection{Reducing the number of variables}
Actually, it is possible to reduce the number of variables
using three different tricks.
\begin{enumerate}
\item Since the code is QD, the vector $\xv$ is a union of
  orbits under the action of the additive group $\group$.
  Therefore, one can introduce formal variables $A_1, \ldots, A_\gamma$
  corresponding to the generators of $\group$. Then, one can replace 
  $(X_1, \ldots, X_n)$ by
  \begin{equation}\label{eq:formal_support}
  (T_1,\ T_1 + A_1,\ \ldots\ ,\ T_1 + A_1 + \cdots + A_\gamma,\ 
  T_2, T_2+A_1,\ \ldots \ ).
  \end{equation}
  for some variables $T_1, \ldots, T_{n_0}$.
\item Without loss of generality and because of the $2$--transitive
  action of the affine group on $\Fqq$, one can suppose that
  the first entries of $\xv$ are $0$ and $1$ respectively
  (see for instance \cite[Appendix A]{COT17}).
  Therefore, in (\ref{eq:formal_support}), one can replace $T_1$ by $0$
  and $A_1$ by $1$.
\item Similarly to the approach of \S~\ref{ss:brute_force},
  one can shorten the codes so that
  $\DC$ has only dimension $2$, which reduces the number of variables $U_{ij}$
  to $2c$ and also reduces the length of the support we seek and
  hence reduces the number of the variables $T_i$.
\end{enumerate}

\subsubsection{On the structure of the polynomial system}
The polynomial equations have all the following features:
\begin{itemize}
\item Any equation is the sum of an affine and a bilinear form;
\item Any degree $2$ monomial is either
  of the form $U_{ij}A_k$ or of the form $U_{ij}T_k$. 
\end{itemize}

Table~\ref{tab:nr_of_variables} lists for the different proposals
the number of variables of type $U, A$ and $T$ of the system when
we use the previously described shortening trick.

\begin{table}[h]
  \centering
  \begin{tabular}{|c|c||c|c|}
    \hline
    Proposal & Number of $U_{ij}$'s & Number of $A_i$'s & Number of $T_i$'s \\
    \hline
    \dags{1} & 8 & 3 & 31 \\
    \hline
    \dags{3} & 8 & 4 & 27 \\
    \hline
    \dags{5} & 4 & 5 & 25 \\
    \hline
  \end{tabular}
  \vspace{.2cm}
  \caption{Number of variables of type $U, A$ and $T$ of the system}
  \label{tab:nr_of_variables}
\end{table}

\subsection{Finishing the attack}\label{ss:finishing}

When the previous step of the attack is over, then, if we used the
first approach based on a brute force search of $\DC$, we know at
least $\NTC (\xv)$ or $\NTC (\xv_\Ind)$ for some set $\Ind$ of
positions. If we used the second approach, then $\xv$ is already
computed, or at least $\xv_{\Ind}$ for some set of indexes $\Ind$.
Thus, there remains to be able to
\begin{enumerate}[(1)]
\item recover $\xv$ from $\NTC (\xv)$ or $\xv_\Ind$ from $\NTC (\xv_\Ind)$;
\item recover $\yv$ from $\xv$ or $\yv_\Ind$ from $\xv_\Ind$;
\item recover $\xv, \yv$ from $\xv_\Ind, \yv_\Ind$.
\end{enumerate}

%Let us treat these three questions.

\subsubsection{Recovering $\xv$ from $\NTC (\xv)$}
The code $\NTC (\xv)$ has dimension $4$ over $\Fq$
and is spanned by $\onev, \tr(\xv), \tr(\alpha \xv), \nr (\xv)$.
It is not difficult to prove that%  the same code after base field
% extension satisfies
$$
\NTC (\xv) \otimes \Fqq = \langle \onev, \xv, \xv^{\star q},
\xv^{\star(q+1)} \rangle,
$$
where $\NTC (\xv) \otimes \Fqq$ denotes the $\Fqq$--linear code
contained in $\Fqq^n$ and spanned over $\Fqq$ by the elements of
$\NTC (\xv)$.

% This code is peculiar in the sense that its square is
% smaller than the square of an arbitrary code of dimension $4$. Indeed,
% according to \cite{CCMZ15}, the square of a random code of dimension
% $4$ has dimension $10$ with a high probability, while:
% \begin{lemma}\label{lem:distinguish_NTC}
%   If $n > 2q+2$, then
%   $\dim \NTC (\xv)^{\star 2} = 9$.
% \end{lemma}

% \begin{proof}
%   Note first that
%   $$
%     \dim_{\Fq}\NTC (\xv)^{\star 2} = 
%   \dim_{\Fqq} (\NTC (\xv)^{\star 2})\otimes \Fqq =
%   \dim_{\Fqq} (\NTC (\xv) \otimes \Fqq)^{\star 2}.$$
%   Hence, let us study the square of $\NTC (\xv) \otimes \Fqq$,
%   \begin{align*}
%   \langle \onev, \xv, \xv^{\star q}, \xv^{\star(q+1)}\rangle^{\star 2}
%   & =\\
%     \langle \onev, \xv, \xv^{\star 2}, &\xv^{\star q}, \xv^{\star (q+1)},
%   \xv^{\star (q+2)}, \xv^{\star (2q)},  \xv^{\star (2q+1)},
%    \xv^{\star (2q+2)}\rangle.
%   \end{align*}
%   One can check that these vectors are independent when
%   $n > 2q+2$.\qed
% \end{proof}

% \begin{remark}
%   The previous lemma provides an additional test to check whether the
%   code we computed was actually $\NTC (\xv)$ by simply computing its
%   square.% It has been used in our implementations.
% \end{remark}

Because of the $2$--transitivity
of the affine group on $\Fqq$, without loss of generality,
one can suppose that the first entry
of $\xv$ is $0$ and the second one is $1$
(see for instance \cite[Appendix A]{COT17}).  Therefore, after
shortening $\NTC (\xv) \otimes \Fqq$ we get a code that we
call $\SC$, which is of the form
$$
\SC \eqdef \sh{\NTC (\xv)\otimes \Fqq}{\{1\}} = 
\langle \xv, \xv^{\star q}, \xv^{\star (q+1)} \rangle_{\Fqq}.
$$
Next, a simple calculation shows that
$$
\SC \cap \SC^{\star 2} = \langle \xv^{\star (q+1)}\rangle.
$$
Since, the second entry of $\xv$ has been set to $1$, we can 
deduce the value of $\xv^{\star (q+1)}$.

\begin{remark}\label{rem:field_of_def_for_computations}
  Actually, both $\SC$ and $\NTC (\xv)$ have a basis defined over
  $\Fq$, therefore, to get $\langle \xv^{\star (q+1)}\rangle_{\Fq}$
  it is sufficient to perform any computation on codes
  defined over $\Fq$.
\end{remark}

Now, finding $\xv$ is easy: enumerate the affine subspace of
$\NTC (\xv) \otimes \Fqq$ of vectors
whose first entry is $0$ and second entry is $1$
(or equivalently, the affine subspace of vectors of $\SC$
whose first entry equals $1$). For any such vector $\cv$, compute
$\cv^{\star (q+1)}$. If $\cv^{\star (q+1)} = \xv^{\star (q+1)}$, then $\cv$
equals either $\xv$ or $\xv^{\star q}$.
Since $\Alt{r}{\xv}{\yv} = \Alt{r}{\xv^{\star q}}{\yv^{\star q}}$
(see for instance \cite[Lemma 39]{COT17}), taking
$\xv$ or $\xv^{\star q}$ has no importance. Thus, without loss of generality,
one can suppose $\xv$ has been found.

\subsubsection{Recovering $\yv$ from $\xv$}\label{ss:yv_from_xv}
This is very classical calculation. The public code
$\Cpub$ is alternant, and hence is well--known to have a 
parity--check matrix defined over $\Fqq$ of the form
\begin{equation}\label{eq:Parity_check_alternant}
\Hm_{\rm pub} = \begin{pmatrix}
  y_1 & \cdots & y_n \\
  x_1 y_1 & \cdots & x_n y_n \\
  \vdots & & \vdots \\
  x_1^{r-1}y_1 & \cdots & x_n^{r-1}y_n
\end{pmatrix}.
\end{equation}
Denote by $\Gm_{\rm pub}$ a generator matrix of $\Cpub$.
Then, since the $x_i$'s are known, then the $y_i's$
can be computed by solving the linear system 
$$
\Hm_{\rm pub} \cdot \Gm_{\rm pub}^{\top} = 0.
$$

\subsubsection{Recovering $\xv, \yv$ from $\xv_\Ind, \yv_{\Ind}$}
After a suitable reordering of the indexes, one can suppose that
$\Ind = \{s, s+1, \ldots, n\}$. Hence, the entries
$x_1, \ldots, x_{s-1}$ of $\xv$ and $y_1, \ldots, y_{s-1}$ are
known. %Let us explain how to compute $x_s, y_s$.
Set $\Ind' \eqdef \Ind \setminus \{s\}$.
Thus, let $\Gm(\Ind')$ be a generator matrix of $\Alt{r}{\xv_{\Ind'}}{
\yv_{\Ind'}}$, which is nothing by $\sh{\Cpub}{\Ind'}$.
Using (\ref{eq:Parity_check_alternant}), we have
$$
\begin{pmatrix}
  y_1 & \cdots & y_s \\
  x_1 y_1 & \cdots & x_s y_s \\
  \vdots & & \vdots \\
  x_1^{r-1}y_1 & \cdots & x_s^{r-1}y_s  
\end{pmatrix}
\cdot \Gm(\Ind')  = 0.
$$
In the above identity, all the $x_i's$ and $y_i's$ are known
but $x_s, y_s$. The entry $y_s$ can be found by solving the linear system
$$
\begin{pmatrix}
  y_1 & \cdots & y_s
\end{pmatrix}
\cdot
\Gm (\Ind') = 0.
$$
Then, $x_s$ can be deduced by solving the linear system
$$
\begin{pmatrix}
  x_1y_1 & \cdots & x_sy_s
\end{pmatrix}
\cdot
\Gm (\Ind') = 0.
$$
By this manner, we can iteratively recover the entries
$x_{s+1}, \ldots, x_n$ and $y_{s+1}, \ldots, y_n$.
The only constraint is that $\Ind$ should be
small enough so that $\sh{\Cpub}{\Ind}$ is nonzero. But this always
holds true for the choices of $\Ind$ we made in the previous sections.

% \subsection{Summary of the attack}

% \ac{To do, avec un pseudo code d'algorithme}

\subsection{Comparison with a previous attack}
First, let us recall the attack on Wild Goppa codes over quadratic
extensions \cite{COT17}. This attack concerns some subclass of
alternant codes called {\em wild Goppa codes}. For such codes a
distinguisher exists which permits to compute a filtration of the
public code. Hence, after some computations, we obtain the subcode
$\Alt{r+q+1}{\xv}{\yv}$ of the public code $\Alt{r}{\xv}{\yv}$.  Then,
according to Heuristic~\ref{heur:strong}, the computation of a
conductor permits to get the code $\NTC(\xv)$. As soon as $\NTC (\xv)$
is known, the recovery of the secret is easy. Note that, the use of
the techniques of \S~\ref{ss:finishing} can significantly simplify
the end of the attack of \cite{COT17} which was rather technical.

We emphasise that, out of the calculation of $\NTC (\xv)$ by computing a
conductor which appears in our attack so that in \cite{COT17}, the two
attacks remain very different. Indeed, the way one gets a subcode whose
conductor into the public code provides $\NTC (\xv)$ is based in
\cite{COT17} on a distinguisher which does not work for general
alternant codes which are not Goppa codes. In addition, in the present
attack, the use of the permutation group is crucial, while it was useless
in \cite{COT17}.

%%% Local Variables:
%%% mode: latex
%%% TeX-master: "Article"
%%% End:

%\input{Extensions}
\section{Complexity of the first version of the
  attack}\label{sec:complexity}
As explained earlier, we have not been able to provide a complexity
analysis of the approach based on polynomial system solving. In
particular because the Macaulay matrix in degree $2$ of the system
turned out to have a surprisingly low rank, showing that this polynomial
system was far from being generic.  Consequently,
we limit our analysis to the first approach based on performing a
brute force search on the subcode $\DC$.

Since we look for approximate work factors, we will discuss an upper
bound on the complexity and not only a big $O$.

\subsection{Complexity of calculation of Schur products}
\label{ss:complexity_Schur}
A Schur product $\spc{\AC}{\BC}$ of two codes $\AC, \BC$
of length $n$ and respective dimensions $k_a, k_b$
is computed as follows.
\begin{enumerate}
\item Take bases $\av_1, \ldots, \av_{k_a}$ and 
  $\bv_1, \ldots, \bv_{k_b}$ of $\AC$ and $\BC$ respectively
  and construct a matrix $\mat M$ whose 
  rows are all the possible products $\spc{\av_i}{\bv_j}$,
  for $1 \leq i \leq k_a$ and $1 \leq j \leq k_b$.
  This matrix has $k_a k_b$ rows and $n$ columns.
\item Perform Gaussian elimination to get a reduced echelon form
  of $\mat M$.
\end{enumerate}
The cost of the computation of a reduced echelon form of a
$s \times n$ matrix is $ns\min(n,s)$ operations in the base field.
The cost of the computation of the matrix $\mat M$ is the cost of
$k_a k_b$ Schur products of vectors, i.e.
$n k_a k_b$ operations in the base field. This
leads to an overall calculation of the Schur product equal to
$$
n k_a k_b + nk_a k_b \min (n, k_a k_b)
$$
operations in the base field.  When $k_a k_b \geq n$, the cost of the
Schur product can be reduced using a probabilistic shortcut described
in \cite{CMP17}.
%%% Begin cut ACr
It consists in computing an $n \times n$ submatrix of
$\mat M$ by choosing some random subset of products
$\spc{\av_i }{\bv_j}$.
%%% End cut ACr
This permits to reduce the cost of computing
a generator matrix in row echelon form of $\spc{\AC}{\BC}$ to
$2n^3$ operations in the base field.

\subsection{Cost of a single iteration of the brute force search}
Computing the conductor $\cond{\XC}{\Cpub}$ consists in computing the
code ${(\spc{\XC}{\Cpub^\perp})}^\perp$. Since our attack consists in
computing such conductors for various $\XC$'s, one can compute a
generator matrix of $\Cpub^\perp$ once for good.  Hence, one can
suppose a generator matrix for $\Cpub^\perp$ is known. Then, according
to \S~\ref{ss:complexity_Schur}, the calculation of a generator matrix
of $\spc{\XC}{\Cpub^\perp}$ costs at most $2n^3$ operations in $\Fq$.
%%% Begin cut ACr
% Next,
% note that for most of the iterations, there is no need to deduce a
% generator matrix in reduced echelon form of
% ${(\spc{\XC}{\Cpub^\perp})}^\perp$, since it suffices to evaluate the
% dimension of $\spc{\XC}{\Cpub^\perp}$, which is immediate from the
% generator matrix in reduced echelon form.  If the dimension of the
% code is not the expected one, namely $n-\textrm{dim} \DC = n - 4$,
% then we skip to the next iteration.

% Hence, the overall cost of a single iteration of the brute force search is
% bounded above by $2n^3$ operations in $\Fq$.
%%% End cut ACr

\subsection{Complexity of finding $\DC$ and 
  $\NTC (\xv)$}
According to \S~\ref{ss:brute_force}, the average number of iterations
of the brute force search is $q^{2{\rm Codim} \DC}$, that is
$q^{\frac{4q}{|\group|}}$.  Thus, we get an overall cost of the first
step bounded above by
$$
2n^3 q^{\frac{4q}{|\group|}}\ {\rm operations\ in\ }\Fq.
$$
Since, $n = \Theta (q^2)$, we get a complexity in
$O(n^{3+\frac{2q}{|\group|}})$ operations in $\Fq$ for the computation
of $\NTC (\xv)$.

\subsection{Complexity of deducing $\xv, \yv$
from $\NTC (\xv)$}
A simple analysis shows that
the final part of the attack is negligible compared to the 
previous step.
%%% Begin cut ACr
Indeed,
\begin{itemize}
\item the computation of $\NTC (\xv)^{\star 2}$ costs $O(n^2)$ operations
  in $\Fq$ (because of Remark~\ref{rem:field_of_def_for_computations},
  one can perform these computations over $\Fq$)
  since the code has dimension $4$;
\item the computation of $\NTC (\xv)^{\star 2} \cap \NTC (\xv)$
  boils down to linear algebra and costs
  $O(n^3)$ operations in $\Fq$;
\item The enumeration of the subset of $\NTC (\xv) \otimes \Fqq$ of
  elements whose first entry is $0$ an second one is $1$ and
  computation of their norm costs $O(q^4 n) = O(n^3)$ operations in
  $\Fqq$.  Indeed the affine subspace of $\NTC (\xv) \otimes \Fqq$
  which is enumerated has dimension $2$ over $\Fqq$ and hence has
  $q^4$ elements, while the computation of the component wise norm of
  a vector costs $O(n)$ operations assuming that the Frobenius
  $z \mapsto z^q$ can be computed in constant time in $\Fqq$.
\item The recovery of $\yv$ from $\xv$ boils down to linear algebra
  and hence can also be done in $O(n^3)$ operations in $\Fqq$. If we
  have to recover $\xv, \yv$ from $\xv_\Ind, \yv_\Ind$, it can be done
  iteratively by solving a system of a constant number of equations,
  hence the cost of one iteration is in $O(n^2)$ operations
  in $\Fqq$.
\end{itemize}
Thus, the overall cost remains in $O(n^3)$
  operations in $\Fqq$.
%%% End cut ACr

\subsection{Overall complexity}
As a conclusion, % the second part of the attack is negligible compared
% to the first one. Hence,
the attack has an approximate work factor of
\begin{equation}\label{eq:work_factor}
2n^3q^{\frac{4q}{|\group|}} {\rm operations\ in\ }\Fq.
\end{equation}

\subsection{Approximate work factors of the first variant of
the attack on DAGS parameters}

We assume that operations in $\Fq$ can be done in
constant time. Indeed, the base fields of the public keys of DAGS
proposal are $\F_{32}$ and $\F_{64}$. For such a field, it is reasonable
to store a multiplication and inversion table.

Therefore, we list in Table~\ref{tab:work_factors} some approximate
work factors for DAGS according to~(\ref{eq:work_factor}). The second
column recalls the security levels claimed
in~\cite{BBBCDGGHKONPR17} for the best possible attack.  The last
column gives the approximate work factors for the first variant of our
attack.  
\begin{table}[!h]
  \centering
  \begin{tabular}{|c||c|c|}
    \hline
    {\bf Name} & {\bf Claimed security level} & {\bf Work factor of our
                                                attack}\\
    \hline \hline
    \dags{1} & 128 bits & $\approx 2^{70}$ \\
    \hline
    \dags{3} & 192 bits & $\approx 2^{80}$ \\
    \hline
    \dags{5} & 256 bits & $\approx 2^{58}$\\
    \hline
  \end{tabular}
  \vspace{.2cm}
  \caption{Work factors of the first variant of the attack}
  \label{tab:work_factors}
\end{table}
\renewcommand\arraystretch{1}

%%% Local Variables:
%%% mode: latex
%%% TeX-master: "Article"
%%% End:

\section{Implementation}\label{sec:implem}
Tests have been done using Magma~\cite{BCP97} on an
Intel\textsuperscript{\textregistered} Xeon 2.27 GHz.

Since the first variant of the attack had too significant costs to be
tested on our machines, we tested it on
the toy parameters \dags{0}. We performed 20 tests, which succeeded in
an average time of $2$ hours.

On the other hand, we tested the second variant based on solving a
polynomial system on \dags{1}, {\tt \_3} and {\tt \_5}. We have not
been able to break \dags{3} keys using this variant of the attack, on
the other hand about $100$ tests have been performed for \dags{1} and
\dags{5}. The average running times are listed in
Table~\ref{tab:timings}.
  % of the attack
  % for \dags{1} keys is about $19$ minutes and
  % for \dags{5} keys is about 35 seconds.  

% \begin{table}[!h]
%   \centering
%   \begin{tabular}{|c||c|}
%     \hline
%     {\bf Name} &  {\bf Average time} \\
%     \hline \hline
%     \dags{0} & 2 hours \\
%     \hline
%   \end{tabular}
%   \vspace{.2cm}
%   \caption{Average times for the first variant of the attack, on
%     $\approx XXXX$ tests.}
%   \label{tab:timings}
% \end{table}

\begin{table}[!h]
  \centering
  \begin{tabular}{|c||c|c|}
    \hline
    {\bf Name} & {\bf Claimed security level} & {\bf Average time} \\
    \hline \hline
    \dags{1} & 128 bits & 19 mn \\
    \hline
    \dags{5} & 256 bits & < 1 mn \\
    \hline
  \end{tabular}
  \vspace{.2cm}
  \caption{Average times for the second variant of the attack.}
  \label{tab:timings}
\end{table}

%%% Local Variables:
%%% mode: latex
%%% TeX-master: "Article"
%%% End:

%\input{Conclusion}

\section*{Acknowledgements}
The authors are supported by French {\em Agence nationale de la
  recherche} grants ANR-15-CE39-0013-01 {\em Manta} and
ANR-17-CE39-0007 {\em CBCrypt}.  Computer aided calculations have been
performed using software {\sc Magma} \cite{BCP97}.  The authors
express their deep gratitude to Jean-Pierre Tillich and Julien
Lavauzelle for very helpful comments.

\bibliographystyle{splncs04}
\bibliography{codecrypto}
\end{document}